\newtheorem{fact}{Fact}
 \journalname{Acta Mathematica Vietnamica}
\begin{document}

\title{Codes induced by alternative codes 
}


\author{Ngo Thi Hien$^*$         \and
        Do Long Van$^\dag$ 
}

\authorrunning{Ngo Thi Hien and Do Long Van} 

\institute{$^*$Hanoi University of Science and Technology \at
              No. 1 Dai Co Viet Road, Hanoi, Vietnam \\
              \email{hien.ngothi@hust.edu.vn}           
           \and
           $^\dag$Institute of Mathematics, Vietnam Academy of Science and Technology \at
              No. 18 Hoang Quoc Viet Road, Hanoi, Vietnam \\
              \email{dlvan@math.ac.vn}           
}

\date{Received: date / Accepted: date}

\maketitle

\begin{abstract}
Alternative codes, an extension of the notion of ordinary codes, have been first introduced and considered by P. T. Huy et al. in 2004. As seen below, every alternative code, in its turn, defines an ordinary code. Such codes are called codes induced by alternative codes or alt-induced codes, for short. In this paper we consider these alt-induced codes and subclasses of them. In particular, characteristic properties of such codes are established, and an algorithm to check whether a finite code is alt-induced or not is proposed.
\keywords{Code, alt-induced code, strong alt-induced code, alternative code, strong alternative code.}
\subclass{94A45 \and 68Q45}
\end{abstract}

\section{Introduction}
The theory of length-variable codes has been initiated by M. P. Sch\"utzenberger in the 1950s and then developed by many others. This theory has now become a part of theoretical computer science and of formal languages, in particular. A code is a language such that every text encoded by words of the language can be decoded in a unique way or, in other words, every coded message admits only one factorization into code-words. Codes are useful in many areas of application such as information processing, data compression, cryptography, information transmission and so on. For background of the theory of codes we refer to \cite{BP85,JK97,Shyr91}.

As mentioned above, the definition of codes is essentially based on unambiguity of the (catenation) product of words. 
Different modifications of such a product may lead to different extensions of the notion of codes. Such an approach has been proposed by P.~T.~Huy et al. which deals with the so-called {\it even alternative codes} and their subclasses \cite{HN2004,VNH2010}. 
Throughout this paper, for simplicity, even alternative codes (see \cite{HN2004})  
are called simply {\it alternative codes} instead. 
As seen later, an alternative code is nothing but a pair $(X, Y)$ of languages such that $XY$ is a code and the product $XY$ is unambiguous (Lemma~\ref{L:XYeCodeCode}).
We say that the code $XY$ is induced by the alternative code $(X,Y)$. A code is said to be an {\it alt-induced code} if there exists an alternative code which induces it.

Obviously, every alt-induced code is a code. There exist, however, codes not being alt-induced ones. Characterizing alt-induced codes among codes is, therefore, a meaningful and interesting question.
Given a code $Z$, to prove that $Z$ is an alt-induced code, again by Lemma~\ref{L:XYeCodeCode}, we must first show that $Z$ can be factorized into a product of two languages $X$ and $Y$, $Z=XY$, and then show that the product $XY$ is unambiguous. The first phase of this seemingly relates to a more general question of factorising a language, a code in particular, into a product of simpler languages (see \cite{BP85,BDM2012,HSS2007,HV2006,JC2014,MPSchu66,W2010} and the references therein).

The paper is organized as follows. Section 2 presents basic notions, notations and facts which will be useful in the sequel.
In Section 3 the notion of alt-induced code is introduced. Several basic properties of these codes are shown (Proposition~\ref{P:XYicode-XpYs}, Proposition~\ref{L:Zn-icode}, Proposition~\ref{P:icode-Clocat}). Characterizations for prefix (suffix, bifix) alt-induced codes are established (Theorem~\ref{T:iCpsb-Char}).
A special subclass of alt-induced codes, namely that of {\it strong alt-induced codes} is considered in Section 4. Characterizations and properties of these codes are established (Proposition~\ref{P:XY-siCode}, Theorem~\ref{T:siaCode-Char}, Theorem~\ref{T:siCpsb-Char}, Theorem~\ref{T:rsiCode-Char}).
Finally, Section 5 is reserved to finite alt-induced codes. The main problem is to answer the question whether a given finite code is alt-induced (Theorem~\ref{T:notiCode}). The section ends with an algorithm (Algorithm FIC), with a exponential time complexity (Theorem~\ref{T:oFIC}), for testing whether a given finite code is alt-induced or not.

\section{Preliminaries}

Let $A$ throughout be a finite alphabet, i.e. a non-empty finite set of symbols, which are called letters. Let $A^*$ be the set of all finite words over $A$. The empty word is denoted by $\varepsilon$ and $A^{+}$ stands for $A^{*} \setminus \{\varepsilon\}$. The number of all the occurrences of letters in a word $u$ is the {\it length} of $u$, denoted by $|u|$. Any subset of $A^*$ is a {\it language} over $A$.
A language $X$ is a {\it code} if for any $n, m \geq 1$ and any $x_1,  \dots,  x_n, y_1,  \dots,  y_m \in X,$ the condition
$$x_1x_2  \dots  x_n = y_1y_2  \dots  y_m$$
implies $n = m$ and $x_i = y_i$ for $i=1, \dots, n$. Since $\varepsilon.\varepsilon = \varepsilon$, a code never contains the empty word $\varepsilon$.

	A word $u$ is called an {\it infix} (a {\it prefix}, a {\it suffix}) of a word $v$ if there exist words $x, y$ such that $v=xuy$ (resp., $v = uy$, $v= xu$). The infix (prefix, suffix) is {\it proper} if $xy \ne \varepsilon$ (resp., $y \ne \varepsilon$, $x \ne \varepsilon$).
The set of proper prefixes of a word $w$ is denoted by ${\rm Pref}(w)$. We denote by ${\rm Pref}(X)$ the set of all proper prefixes of the words in $X \subseteq A^*$. The notations ${\rm Suff} (w)$ and ${\rm Suff} (X)$ are defined in a similar way.
	
	For $X, Y \subseteq A^*$, the {\it product} of $X$ and $Y$ is the set $XY = \{xy \ | \ x \in X, y \in Y\}$. The product is said to be {\it unambiguous} if, for each $z \in XY$, there exists exactly one pair $(x,y) \in X \times Y$ such that $z = xy$. We also use the notations
	$$X^0 = \{\varepsilon\}, \ \ X^{n+1} = X^nX \ (n \geq 0).$$

	For $w \in A^*$, we define
$$w^{-1}X = \{u \in A^* \ | \ wu \in X\}, \ Xw^{-1} = \{u \in A^* \ | \ uw \in X\}.$$
These notations are extended to sets in a natural way:
$$X^{-1}Y = \bigcup_{x \in X}x^{-1}Y, \ XY^{-1} = \bigcup_{y \in Y}Xy^{-1}.$$

Let $(X,Y)$ be a pair of non-empty subsets of $A^+$, and let $u_1, u_2, \dots u_n \in X \cup Y, n \geq 2$. We say that $u_1u_2 \dots u_n$ is an {\it alternative factorization on $(X, Y)$} if $u_i \in X$ implies $u_{i+1} \in Y$ and $u_i \in Y$ implies $u_{i+1} \in X$ for all $i = 1, 2, \dots, n-1$.
Two alternative factorizations $u_1u_2 \dots u_n$ and $v_1v_2 \dots v_m$ on $(X, Y)$ are said to be {\it similar } if they both begin and end with words in the same set $X$ or $Y$.

\begin{definition} Let $X$ and $Y$ be two non-empty subsets of $A^+$. The pair $(X, Y)$ is called an {\it alternative code} if no word in $A^+$ admits two different similar alternative factorizations on $(X, Y)$.
\end{definition}

For more details of alternative codes and their subclasses we refer to \cite{Hf2017,HN2004,VNH2010}.

\medskip
Now we formulate, in the form of lemmas, several facts which will be useful in the sequel.

\begin{lemma}[{\cite{BP85}}] \label{L:XnCode}
If $X \subseteq A^+$ is a code, then $X^n$ is a code for all integers $n \geq 1$.
\end{lemma}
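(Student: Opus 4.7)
My plan is to give a direct proof reducing uniqueness of factorizations over $X^n$ to uniqueness of factorizations over $X$. First I would note that since $X \subseteq A^+$ and $n \geq 1$, every word in $X^n$ is non-empty, so $\varepsilon \notin X^n$ and the question of being a code is meaningful.

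Next, I would take an arbitrary word $z \in A^+$ admitting two factorizations
\[
z = w_1 w_2 \cdots w_k = w'_1 w'_2 \cdots w'_\ell
\]
with $w_i, w'_j \in X^n$, and show $k = \ell$ and $w_i = w'_i$ for all $i$. By definition of $X^n$, each $w_i$ can be written as a product $w_i = x_{i,1} x_{i,2} \cdots x_{i,n}$ with $x_{i,j} \in X$, and similarly each $w'_j = x'_{j,1} \cdots x'_{j,n}$ with $x'_{j,r} \in X$. Substituting, $z$ admits two factorizations into words of $X$, one consisting of $kn$ factors and the other of $\ell n$ factors. Since $X$ is a code, the two $X$-factorizations must coincide term by term, which forces $kn = \ell n$ (hence $k=\ell$ because $n \geq 1$) and $x_{i,j} = x'_{i,j}$ for all admissible $i,j$; concatenating the matching factors $n$ at a time yields $w_i = w'_i$.

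There is essentially no obstacle; the only subtle point is that an element of $X^n$ may a priori have several decompositions as a product of $n$ elements of $X$, but that ambiguity is irrelevant to the argument: we just choose any such decomposition on each side and then let the code property of $X$ collapse the ambiguity. The whole argument fits in a few lines without needing induction on $n$.
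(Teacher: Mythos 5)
Your argument is correct and complete: decomposing each factor from $X^n$ into $n$ factors from $X$, invoking the code property of $X$ to get term-by-term agreement of the refined factorizations, and then regrouping $n$ at a time is exactly the standard proof. The paper itself gives no proof — it merely cites Berstel--Perrin — so there is nothing to contrast; your handling of the (in fact harmless) potential non-uniqueness of the $X^n$-into-$X$ decomposition is a nice touch, though since $X$ is a code that decomposition is actually already unique.
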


\begin{lemma}[Sardinas-Patterson's criterion {\cite {SP1953}}, see also {\cite{BP85,HH2012}}] \label{L:SP1953}
Let $X$ be a subset of $A^+$, and let
	$$U_1 = X^{-1}X \setminus \{\varepsilon\}, \ \ U_{n+1} = X^{-1}U_n \cup U_n^{-1}X \ {\rm ~for~}  n \geq 1.$$
Then, X is a code if and only if none of the sets $U_n$ defined above contains the empty word $\varepsilon$.
\end{lemma}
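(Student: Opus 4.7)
The plan is to reduce both directions of the criterion to a single combinatorial invariant describing the elements of $U_n$. Specifically, I would prove by induction on $n \geq 1$ that a nonempty word $v$ lies in $U_n$ if and only if there exist $p, q \geq 1$ and code-words $x_1,\dots,x_p,\,y_1,\dots,y_q \in X$ with $x_1 \ne y_1$ such that
$$x_1 x_2 \cdots x_p = y_1 y_2 \cdots y_q \, v \quad \text{or} \quad y_1 y_2 \cdots y_q = x_1 x_2 \cdots x_p \, v,$$
with $n$ equal to the number of alignment steps (i.e., to $p+q-1$ in a suitable convention). The base case is immediate: $U_1 = X^{-1}X \setminus\{\varepsilon\}$ consists exactly of the nonempty $v$ for which some code-word $x_1$ is a proper prefix of some code-word $y_1$ with $x_1 v = y_1$, i.e.\ $x_1 \ne y_1$ and $v = x_1^{-1} y_1$.

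For the inductive step, I would argue that $X^{-1}U_n$ corresponds to extending the \emph{shorter} side of the current alignment by appending one more code-word $x \in X$ that is a prefix of the current dangling suffix $u \in U_n$, producing the new dangling suffix $x^{-1}u$; dually, $U_n^{-1}X$ corresponds to the case where the dangling suffix $u$ is itself a proper prefix of the next code-word $x$ on the longer side, yielding the new dangling suffix $u^{-1}x$. A direct check, using the definitions of $u^{-1}Y$ and $Yu^{-1}$, shows that these two operations exhaust the ways in which the dangling-suffix walk can continue, so the inductive step closes.

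With the invariant in hand, the ``only if'' direction is transparent: if $\varepsilon \in U_n$ for some $n$, then instantiating the characterization with $v = \varepsilon$ produces two distinct factorizations $x_1\cdots x_p = y_1\cdots y_q$ of the same word with $x_1 \ne y_1$, so $X$ is not a code. For the ``if'' direction, suppose $X$ is not a code and pick two distinct factorizations of the same word; by cancelling common prefixes and then, if necessary, common suffixes we may assume $x_1 \ne y_1$ and that the two factorizations terminate at the same position. I would then walk through the matching process left-to-right, recording the dangling suffix on whichever side is currently shorter; by the invariant each successive dangling suffix lies in $U_k$ for an appropriate $k$, and when the two factorizations finish simultaneously the final dangling suffix is $\varepsilon$, giving $\varepsilon \in U_n$ for some $n$.

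The main obstacle is keeping the inductive characterization precise: the ``shorter side'' can switch back and forth during the walk, so one must state the invariant symmetrically (hence the two disjuncts above) and be careful that each application of either $X^{-1}(\cdot)$ or $(\cdot)^{-1}X$ corresponds to exactly one additional code-word being consumed. Once the invariant is formulated symmetrically, the induction and both implications become routine; the only subtlety worth spelling out is the reduction to the case $x_1 \ne y_1$ at the start of the ``if'' direction, which uses the fact that a counterexample to unique factorization can always be trimmed to one whose first code-words on the two sides differ.
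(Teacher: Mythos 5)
The paper does not reprove Sardinas--Patterson's criterion: the lemma is cited from \cite{SP1953,BP85,HH2012} and used as a black box, so there is no in-paper argument to compare your sketch against. Your reconstruction follows the standard dangling-suffix proof and the plan is basically sound, but two points in the invariant deserve sharpening before the argument is watertight.

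First, do not state the invariant as an if-and-only-if with $n = p + q - 1$. The direction ``$v \in U_n$ implies a configuration of the stated form exists'' is correct, and it is all the forward implication of the criterion needs; the converse is false as you have phrased it. For instance, from $x_1 x_2 = y_1 v$ with $|x_1| > |y_1|$ one gets $v = (y_1^{-1}x_1)\,x_2$, a word of $U_1$ followed by a code-word, and this need not lie in $U_2 = X^{-1}U_1 \cup U_1^{-1}X$ at all: concretely, with $X=\{a,ab,ba\}$, $x_1=ab$, $y_1=a$, $x_2=a$ one has $v=ba$, yet $U_1=\{b\}$ and $U_2=\{a\}$, so the pair $(p,q)=(2,1)$ does not encode a terminated alignment walk. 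Fortunately your converse direction does not use this half of the invariant --- you build $u_1,u_2,\dots$ directly from the recursion $U_{k+1}=X^{-1}U_k\cup U_k^{-1}X$ --- so the fix is merely to state the invariant one-sidedly. Second, you restrict the invariant to nonempty $v$ and then instantiate it at $v=\varepsilon$; the clean route is to note that $\varepsilon \in U_{n+1}$ forces $U_n \cap X \neq \emptyset$ and to apply the (nonempty) invariant to a word of $U_n \cap X$, which, being a code-word, also closes the second factorization. A small slip in the narrative: in $U_n^{-1}X$ the new code-word is appended on the side that was previously shorter and thereby becomes the new longer side; it is not drawn from ``the longer side.''
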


Recall that, a language $X \subseteq A^+$ is a {\it prefix code} ({\it suffix code}) if no word in $X$ is a proper prefix (resp., proper suffix) of another word in it, and $X$ is a {\it bifix code} if it is both a prefix code and a suffix code.
A prefix code (suffix code, bifix code) $X$ is {\it maximal} over $A$ if it is not properly contained in another  prefix code (resp., suffix code, bifix code) over $A$.
Prefix codes, suffix codes and bifix codes play a fundamental role in the theory of codes (see \cite{BP85,HV2006,Shyr91}).

As a consequence of Proposition 4.1 in \cite{BP85} we have

\begin{lemma} \label{L:CpClosed}
	Let $X$ and $Y$ be non-empty subsets of $A^+$. Then
\begin{enumerate}
\renewcommand{\labelenumi} {(\roman{enumi})} 
\item If $X$ and $Y$ are prefix codes (maximal prefix codes), then $XY$ is a prefix code (resp., maximal prefix code);
\item If $XY$ is a prefix code (suffix code), then $Y$ is a prefix code (resp., $X$ is a suffix code);
\item If $X$ is a prefix code and $XY$ is a maximal prefix code, then $X$ and $Y$ are both maximal prefix codes.
\end{enumerate}
\end{lemma}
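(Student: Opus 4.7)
The plan is to prove all three parts by elementary direct arguments on prefixes, rather than deducing them from Proposition~4.1 of \cite{BP85}. Throughout, the key maneuver is: whenever two elements of a prefix code both occur as prefixes of the same word, one must be a prefix of the other, and the prefix property then forces them to be equal.

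For part~(i), to see that $XY$ is prefix whenever $X$ and $Y$ are, I would suppose $xy$ is a proper prefix of $x'y'$ with $x,x' \in X$ and $y,y' \in Y$, note that $x$ and $x'$ are both prefixes of $x'y'$, apply the prefix property of $X$ to conclude $x=x'$, and then read off that $y$ is a proper prefix of $y'$, contradicting the prefix property of $Y$. For the maximality strengthening, I would take an arbitrary $z \in A^+\setminus XY$ and exhibit a word of $XY$ comparable with $z$ (which shows that $XY\cup\{z\}$ is not prefix). Maximality of $X$ splits into two cases: either $z$ is a proper prefix of some $x\in X$, in which case any $xy$ with $y\in Y$ (non-empty since $Y$ is a code) strictly extends $z$; or $z = xz'$ for some $x\in X$ and $z'\in A^*$, and a further appeal to the maximality of $Y$ applied to $z'$ produces a $y\in Y$ comparable with $z'$, from which $xy$ is comparable with $z$ in every subcase.

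Part~(ii) is a one-line observation repeated twice: if $y$ were a proper prefix of $y'$ in $Y$, then for any $x\in X$ the word $xy$ would be a proper prefix of $xy'$ in $XY$, contradicting the prefix property; the suffix assertion is symmetric. For part~(iii), with $X$ prefix and $XY$ maximal prefix, part~(ii) already delivers that $Y$ is prefix, so only the maximality statements remain. I would argue each by contradiction. To show $X$ is maximal, suppose $w \in A^+\setminus X$ is prefix-incomparable with every element of $X$; for any fixed $y\in Y$, I check first that $wy \notin XY$ (a factorization $wy = x'y'$ would, by comparing $w$ and $x'$, force $w = x'\in X$) and then that $wy$ is incomparable with every $x'y' \in XY$ (a prefix relation between them pulls back to a prefix relation between $w$ and some element of $X$), contradicting maximality of $XY$. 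The argument for $Y$ is dual: starting from $v \in A^+\setminus Y$ incomparable with $Y$ and a fixed $x\in X$, the prefix property of $X$ is what allows me to identify the $X$-factor of any competing $x'y' \in XY$ as $x$ itself, reducing every configuration to a forbidden comparison between $v$ and an element of $Y$.

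No step is deep, but the main obstacle is bookkeeping: both the maximality half of~(i) and the entirety of~(iii) rest on a systematic enumeration of the relative lengths of two elements of a prefix code, and it is easy to overlook a boundary case --- in particular the distinction between ``$z \in X$'' and ``$z$ is a proper prefix of some $x\in X$'' in~(i), or the separate verification that $wy \notin XY$ before addressing incomparability in~(iii). Organising each contradiction argument in the order ``first locate the offending word relative to $X$, then relative to $Y$'' keeps the cases under control.
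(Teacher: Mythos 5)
The paper itself gives no proof of Lemma~\ref{L:CpClosed}: it merely records it as a consequence of Proposition~4.1 of Berstel--Perrin and cites \cite{BP85}. Your proposal therefore supplies something the paper does not have, namely a self-contained elementary derivation, and on inspection it is correct. Part~(i) for prefixness is the standard ``align the $X$-factors, then the $Y$-factors'' argument; the maximality half is handled correctly by the two-way case split on where $z$ sits relative to $X$, with the boundary sub-cases ($z'=\varepsilon$, which would put $z$ in $X$, and $z'\in Y$, which would put $z$ in $XY$) implicitly absorbed and correctly disposed of. Part~(ii) is the obvious cancellation of a common left (resp.\ right) factor. Part~(iii) is organised well: one first checks the candidate witness ($wy$ or $xv$) lies outside $XY$, then checks incomparability with all of $XY$, and in the $Y$-maximality case the prefix property of $X$ is exactly what lets you cancel the $X$-factor before comparing the $Y$-factors. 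One small stylistic remark: in part~(i) the phrase ``appeal to the maximality of $Y$ applied to $z'$'' is not quite meaningful when $z'=\varepsilon$ (maximality concerns $A^+$), though your conclusion is trivially true there; a sentence separating $z'=\varepsilon$ from $z'\in A^+$ would make the argument airtight. Compared with the paper's approach, which buys brevity by outsourcing to \cite{BP85}, your route buys self-containment at the cost of the case bookkeeping you yourself flag.
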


A subset $X$ of $A^*$ is {\it thin } if there exists at least one word $w \in A^*$ which is not an infix of any word in $X$, i.e. $X \cap A^*wA^* = \emptyset$. Evidently, for any $X,Y \subseteq A^*$, if $XY$ is thin then $X$ and $Y$ are thin also. Conversely we have

\begin{lemma}[{\cite{BP85}}, page 65] \label{L:ThinProd}
	For any $X,Y \subseteq A^*$, if $X$ and $Y$ are both thin then their product $XY$ is thin too.
\end{lemma}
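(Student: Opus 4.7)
The plan is to directly exhibit a word $w$ that is not an infix of any element of $XY$, using the witnesses given by thinness of $X$ and $Y$. By hypothesis, there exist $u, v \in A^*$ such that $u$ is not an infix of any word in $X$ and $v$ is not an infix of any word in $Y$. I would first observe that these witnesses can be chosen non-empty without loss of generality: if $u$ is not an infix of any word in $X$, then neither is any extension $ua$ (since $u$ would in turn be an infix of such an extension), so replacing $u$ by $ua$ for a fixed letter $a$ preserves the property. Similarly for $v$.

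The natural candidate for a non-infix of $XY$ is $w = uv$, and the plan is to show $XY \cap A^* w A^* = \emptyset$. Suppose for contradiction that $w$ appears as an infix of some $z \in XY$, say $z = xy$ with $x \in X$, $y \in Y$, and $z = \alpha u v \beta$ for some $\alpha, \beta \in A^*$. Set $p = |x|$. The argument splits according to where the occurrence of $u$ sits relative to the boundary between $x$ and $y$:

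\begin{itemize}
\item If $|\alpha| + |u| \leq p$, the occurrence of $u$ lies entirely within the prefix $x$ of $z$, so $u$ is an infix of $x \in X$, contradicting the choice of $u$.
\item If $|\alpha| + |u| > p$, then the occurrence of $v$ in $z$ begins at position $|\alpha|+|u|+1 > p+1$, hence lies entirely within the suffix $y$ of $z$, so $v$ is an infix of $y \in Y$, contradicting the choice of $v$.
\end{itemize}

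There is really no serious obstacle: the whole argument is a short length-counting exercise, and the only place one has to be careful is the borderline instance $|\alpha| + |u| = p$, which falls cleanly into the first case (with $u$ occurring as a suffix of $x$) and needs no separate treatment. The non-emptiness of $v$, secured at the start, is exactly what makes the second case legitimate, since it guarantees that $v$ actually has a starting position to locate relative to $p$.
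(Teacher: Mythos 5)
The paper does not give a proof of this lemma---it is cited directly from Berstel--Perrin---and your argument is correct and is precisely the standard one: the concatenation $uv$ of the two witnesses serves as a witness for $XY$, and the two-case position count around the boundary $p=|x|$ is exactly what is needed. One small simplification you could make: once $X$ and $Y$ are non-empty, the witnesses are automatically non-empty (since $\varepsilon$ is an infix of every word it can never be a witness), and if either set is empty then $XY=\emptyset$ is trivially thin, so the explicit extension $u\mapsto ua$ is not actually required.
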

 
	Concerning the maximality of thin codes we have

\begin{lemma}[{\cite{BP85}}, Proposition 2.1, page 145] \label{L:ThinMb}
	Let $X$ be a thin subset of $A^+$. Then, $X$ is a maximal bifix code if and only if $X$ is both a maximal prefix code and a maximal suffix code.
\end{lemma}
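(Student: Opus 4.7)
My plan is to prove the two implications separately, noting that only the forward direction actually uses thinness.

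For the easier direction, I would assume $X$ is simultaneously a maximal prefix code and a maximal suffix code. Then $X$ is at once a prefix code and a suffix code, hence a bifix code. If some bifix code $Y$ properly contained $X$, then in particular $Y$ would be a prefix code properly containing $X$, contradicting the maximality of $X$ as a prefix code. So $X$ must be a maximal bifix code, and this argument is purely combinatorial and does not require $X$ to be thin.

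For the harder direction, suppose $X$ is a thin maximal bifix code, and suppose for contradiction that $X$ is not a maximal prefix code (the suffix case is symmetric). The idea is to find a word $w \notin X$ such that $X \cup \{w\}$ is still a bifix code, contradicting the maximality of $X$ as a bifix code. Since $X$ fails to be a maximal prefix code, there is some $w_0 \in A^+ \setminus X$ with $X \cup \{w_0\}$ a prefix code; the problem is that $X \cup \{w_0\}$ need not be a suffix code. I would use thinness of $X$ to repair this: because $X$ is thin, Lemma~\ref{L:ThinProd} (combined with standard facts about thin sets) tells us there is a word $z \in A^*$ which is not an infix of any word in $X$, and from such a $z$ one can construct, by prepending or appending a suitable word to $w_0$, a new word $w$ still making $X \cup \{w\}$ prefix but now also suffix, using that a sufficiently "generic" extension cannot coincide, as a suffix, with any word of $X$.

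The main obstacle is exactly this repair step: turning a witness against maximal prefixity into a witness against maximal bifixity while preserving the prefix property. The cleaner way to organize this, which I would ultimately prefer, is to invoke the well-known characterization (for thin codes) of maximal bifix codes as precisely the complete bifix codes, and the parallel characterization of thin maximal prefix (respectively suffix) codes as complete prefix (respectively suffix) codes. Granted those, one argues: a thin maximal bifix code is complete, hence a maximal code; a maximal code that happens to be a prefix code is automatically a maximal prefix code (any strict prefix extension would still be a code); and symmetrically on the suffix side. This reduction to completeness hides the combinatorial surgery with the word $z$ inside the standard machinery of thin codes from \cite{BP85}, and is how I would present the proof in the final writeup.
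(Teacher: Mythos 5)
The paper does not prove this lemma at all; it is quoted verbatim from Berstel--Perrin (Proposition~2.1, p.~145) and used as a black box, so there is no ``paper's own proof'' to compare against. Evaluating your proposal on its own terms: the first direction (maximal prefix and maximal suffix together imply maximal bifix, no thinness needed) is correct and clean. The second direction, however, is where all the content of the lemma lives, and neither of your two routes actually closes it.

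In the ``repair'' route, the obstacle you flag is real and not merely technical. If $w_0$ witnesses failure of maximal prefixity and $z$ is not a factor of any word of $X$, then $w' = w_0 z$ does inherit prefix-incomparability with $X$ (any $x \in X$ that were a prefix of $w_0z$ would have to be prefix-comparable with $w_0$), and $w'$ cannot be a prefix, suffix or factor of any word of $X$ because it contains $z$. But nothing prevents a short word of $X$ (of length at most $|z|$) from being a suffix of $z$, hence of $w'$, and iterating $z \mapsto z^n$ or $z \mapsto za$ just relocates this short bad suffix without eliminating it. In fact one can show that if \emph{every} non-factor of $X$ has a suffix in $X$, then $X$ is already a maximal suffix code, so the problematic configuration is not artificial: you genuinely need to handle the case where $X$ is maximal as a suffix code but not as a prefix code, and there the naive surgery has nothing to grab onto. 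You give no construction that survives this case.

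In the ``completeness'' route, the entire difficulty is concentrated in the single invoked statement that a thin maximal bifix code is complete (equivalently, is a maximal code). Once that is granted, the rest is correct and easy: a complete thin code is a maximal code by the Chapter~I theorem of Berstel--Perrin, and a maximal code that is a prefix (resp.\ suffix) code is trivially a maximal prefix (resp.\ suffix) code, since any strict prefix-code extension would still be a code. But ``thin maximal bifix implies complete'' is not a lighter lemma sitting upstream of Proposition~2.1 --- in Berstel--Perrin it is part of the same circle of results about thin bifix codes, proved with the kernel/degree machinery, and is at least as hard as the statement you are trying to prove. Invoking it does not reduce the problem, it renames it; you have effectively cited the lemma to prove the lemma. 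As written, the hard implication has a genuine gap.
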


A simple characterization of the unambiguity of a product of two languages is given in the following lemma.
\begin{lemma}[\cite{HN2004}, see also \cite{VNH2010}] \label{L:XYUnam}
	Let $X$ and $Y$ be non-empty subsets of $A^+$. Then, the product $XY$ is unambiguous if and only if $X^{-1}X \cap YY^{-1} \setminus \{\varepsilon\} = \emptyset$.
\end{lemma}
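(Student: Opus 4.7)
The plan is to prove both directions by exhibiting an explicit correspondence between ambiguous factorizations $x_1 y_1 = x_2 y_2$ and non-empty elements of $X^{-1}X \cap YY^{-1}$. The key observation is that if two factorizations of the same word agree on total length, then the ``overlap word'' between the two $X$-parts is precisely the same as the overlap between the two $Y$-parts, and this overlap lies in both $X^{-1}X$ and $YY^{-1}$.

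First, for the ``only if'' direction, I would assume $XY$ is ambiguous, so that there exist $(x_1,y_1) \neq (x_2,y_2)$ in $X \times Y$ with $x_1 y_1 = x_2 y_2$. After possibly swapping the two pairs, we may take $|x_1| \leq |x_2|$. Then $x_1$ is a prefix of $x_2$, so we can write $x_2 = x_1 w$ for some $w \in A^*$, and cancelling in $x_1 y_1 = x_1 w y_2$ gives $y_1 = w y_2$. If $w = \varepsilon$ the two factorizations coincide, contradicting ambiguity; hence $w \neq \varepsilon$. From $x_1 w = x_2$ we get $w \in x_1^{-1}X \subseteq X^{-1}X$, and from $w y_2 = y_1$ we get $w \in Y y_2^{-1} \subseteq YY^{-1}$. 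Therefore $w \in X^{-1}X \cap YY^{-1} \setminus\{\varepsilon\}$.

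For the ``if'' direction, I would start with an arbitrary $w \in X^{-1}X \cap YY^{-1} \setminus\{\varepsilon\}$ and directly construct two distinct factorizations of the same word in $XY$. Since $w \in X^{-1}X$ there exist $x_1, x_2 \in X$ with $x_1 w = x_2$, and since $w \in YY^{-1}$ there exist $y_1, y_2 \in Y$ with $w y_2 = y_1$. Then $x_1 y_1 = x_1 w y_2 = x_2 y_2$. These two factorizations must be distinct, for otherwise $x_1 = x_2$ would force $w = \varepsilon$, contradicting $w \neq \varepsilon$. Hence $XY$ is ambiguous.

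No step presents a real obstacle: both directions reduce to one manipulation of prefixes/suffixes and one careful check that the witness word $w$ is non-empty. The only thing to be careful about is the direction of the quotients $X^{-1}X$ versus $YY^{-1}$, so I would write out the definitions once at the start to make sure the inclusions $w \in X^{-1}X$ and $w \in YY^{-1}$ go in the correct direction.
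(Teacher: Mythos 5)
Your proof is correct, and it is the standard argument one would give for this statement. Note that the paper itself does not prove this lemma — it is quoted from \cite{HN2004} and \cite{VNH2010} — so there is no in-paper proof to compare against; your argument is the natural one, correctly handling both directions, the direction of the quotients $x_1^{-1}X\subseteq X^{-1}X$ and $Yy_2^{-1}\subseteq YY^{-1}$, and the key non-emptiness check on the overlap word $w$.
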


The following result claims a basic characterization for alternative codes.

\begin{lemma}[\cite{HN2004}, see also {\cite{VNH2010}}] \label{L:XYeCodeCode}
	Let $X$ and $Y$ be non-empty subsets of $A^+$. Then, $(X, Y)$ is an alternative code if and only if $XY$ is a code and the product $XY$ is unambiguous.
\end{lemma}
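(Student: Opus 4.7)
The proof splits naturally into the two implications, and the forward direction is essentially a direct reading of the definitions while the reverse direction requires a mild trick.

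For the forward direction, assume $(X,Y)$ is an alternative code. To see that $XY$ is a code, suppose $(x_1y_1)(x_2y_2)\cdots(x_ny_n) = (x_1'y_1')\cdots(x_m'y_m')$ with $x_i,x_j' \in X$ and $y_i,y_j' \in Y$. Erasing the parentheses yields two alternative factorizations on $(X,Y)$ that each begin with a word of $X$ and end with a word of $Y$; they are therefore similar. The alternative-code hypothesis forces them to coincide as sequences, giving $n=m$ and $x_iy_i = x_i'y_i'$ for each $i$. The same kind of argument settles unambiguity of the product: if $xy=x'y'$ with $x,x'\in X$ and $y,y'\in Y$, then $xy$ and $x'y'$ are already two similar alternative factorizations, so they must coincide as sequences, whence $x=x'$ and $y=y'$.

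For the reverse direction, assume $XY$ is a code and its product is unambiguous. Suppose $w$ admits two similar alternative factorizations $f = u_1u_2\cdots u_n$ and $f' = v_1v_2\cdots v_m$. Because the factorizations alternate strictly between $X$ and $Y$ and are similar, there are four cases depending on the starting and ending sets. If $f$ and $f'$ both begin with a word of $X$ and end with a word of $Y$, then both have the form $(x_1y_1)(x_2y_2)\cdots$, so $w\in (XY)^*$ with two factorizations in that set; since $XY$ is a code, the lengths and the corresponding factors in $XY$ match, and unambiguity of $XY$ then gives $x_i=x_i'$ and $y_i=y_i'$. In the three remaining cases we reduce to this one: pick any $a\in X$ and $b\in Y$ (possible as $X,Y\neq\emptyset$) and pass from $w$ to $awb$, $aw$, or $wb$ as appropriate. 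For instance, if $f,f'$ both start with a word of $Y$ and end with a word of $X$, the word $awb$ has two factorizations of the form $(ay_1)(x_1y_2)\cdots(x_{k-1}y_k)(x_kb)$ in $(XY)^+$; the code and unambiguity hypotheses again force the two factorizations to agree term by term, and hence (cancelling the added $a$ and $b$) $f = f'$. The cases where both factorizations start and end in the same one of $X,Y$ are treated identically, appending $b\in Y$ or prepending $a\in X$ only.

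The whole argument is essentially bookkeeping; the only step with any content is the observation that multiplying by a single letter from $X$ or $Y$ on either end converts an arbitrary similar pair of alternative factorizations into a pair of factorizations in $(XY)^*$, where the code property of $XY$ together with the unambiguity of the product $XY$ can be applied directly. The main minor subtlety to watch is that cancelling the added $a$ or $b$ at the end only works because unambiguity of $XY$ gives the separation of the outermost $X$-factor from the outermost $Y$-factor, so one really does recover equality of the inner sequences $u_i=v_i$ as required.
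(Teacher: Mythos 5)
The paper itself does not prove this lemma but cites it from~\cite{HN2004} and~\cite{VNH2010}, so there is no in-paper proof to compare against. Judged on its own, your argument is correct and uses exactly the right idea: an alternative factorization beginning in $X$ and ending in $Y$ can be bracketed into $(XY)$-blocks, and the remaining three start/end configurations are reduced to this one by prepending a fixed $a\in X$ and/or appending a fixed $b\in Y$. The forward direction is indeed just reading off the definitions, since a comparison of two factorizations over $XY$ (in particular the length-$1$ case, for unambiguity) produces a pair of similar alternative factorizations, which the alternative-code hypothesis forces to coincide term by term.

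One small inaccuracy in your closing remark: recovering $u_1=v_1$ from $au_1=av_1$ (or $u_n=v_n$ from $u_nb=v_nb$) is plain left- or right-cancellation in the free monoid and does not need the unambiguity of $XY$. Where unambiguity is genuinely needed is in splitting the \emph{interior} blocks $u_{2i}u_{2i+1}=v_{2i}v_{2i+1}$ (with the first factor in $X$, the second in $Y$) into $u_{2i}=v_{2i}$ and $u_{2i+1}=v_{2i+1}$; the code property alone only equates the blocks, not their halves. Also, $a$ and $b$ are arbitrary nonempty words of $X$ and $Y$, not letters. These are cosmetic points; the structure of the proof and all the essential steps are sound.
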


\section{Codes induced by alternative codes}

We introduce and consider in this section a new class of codes concerning alternative codes which are called alt-induced codes. Characterizations for prefix (suffix, bifix) alt-induced codes are established.

\begin{definition} \label{D:icode}
	A subset $Z$ of $A^+$ is called a {\it  code induced by an alternative code} ({\it alt-induced code}, for short) if there is an alternative code $(X, Y)$ over $A$ such that $Z = XY$.
\end{definition}

As usual, a language $Z$ is a {\it prefix (suffix, bifix) alt-induced code} if it is an alt-induced code as well as a prefix (resp., suffix, bifix) code.

\medskip
Let us take an example.

\begin{example}	\label{E:icode}
	Consider the sets $X = \{ab, abba\}$ and $Y =\{b\}$ over $A = \{a, b\}$. By virtue of Lemma~\ref{L:XYUnam} and Lemma~\ref{L:XYeCodeCode}, it is not difficult to check that $(X, Y)$ is an alternative code over $A$. Hence $Z = XY = \{abb, abbab\}$ is an alt-induced code and therefore it is a suffix alt-induced code.
\end{example}

Our purpose, in the rest of the paper, is to answer the question when a given code $Z$ is an alt-induced code. Let us begin with some simple cases.

\medskip
\begin{remark} \label{R:notSCode}
	Evidently, a code is not an alt-induced code if it contains at least one word with the length one. 
\end{remark}

\begin{proposition} \label{P:XYicode-XpYs}
	Let $Z$ be a code over $A$ such that $Z = XY$ with $\emptyset \neq X, Y \subseteq A^+$. If $X$ is a prefix code or $Y$ is a suffix code, then $Z$ is an alt-induced code.
\end{proposition}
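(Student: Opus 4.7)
The plan is to invoke the two characterization lemmas in sequence. Since $Z=XY$ is already assumed to be a code, by Lemma~\ref{L:XYeCodeCode} it suffices to show that the product $XY$ is unambiguous; once that is done, $(X,Y)$ is an alternative code, so $Z$ is an alt-induced code by Definition~\ref{D:icode}. To verify unambiguity, I would apply Lemma~\ref{L:XYUnam}, which reduces the task to showing
\[
X^{-1}X \cap YY^{-1} \setminus \{\varepsilon\} = \emptyset.
\]

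I would handle the two hypotheses separately, and in fact show the stronger statements that one of the two factors of the intersection is already empty. Suppose first that $X$ is a prefix code. If some nonempty word $w$ lies in $X^{-1}X$, then by definition there exist $x,x' \in X$ with $xw=x'$. Since $w\neq \varepsilon$ we have $x\neq x'$, so $x$ is a proper prefix of $x'$, contradicting the prefix-code property of $X$. Thus $X^{-1}X\setminus\{\varepsilon\}=\emptyset$, which trivially forces the above intersection to be empty. The argument when $Y$ is a suffix code is symmetric: any nonempty $w \in YY^{-1}$ would come from $wy=y'$ for distinct $y,y'\in Y$, making $y$ a proper suffix of $y'$ and contradicting the suffix-code property of $Y$; hence $YY^{-1}\setminus\{\varepsilon\}=\emptyset$.

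In either case the required intersection is empty, so $XY$ is unambiguous by Lemma~\ref{L:XYUnam}, $(X,Y)$ is an alternative code by Lemma~\ref{L:XYeCodeCode}, and therefore $Z=XY$ is alt-induced. I do not anticipate any real obstacle here: the proof is essentially a direct unwinding of the three lemmas, with the prefix/suffix hypothesis plugged into the Sch\"utzenberger-style ``residual'' criterion of Lemma~\ref{L:XYUnam}. The only small point worth stating explicitly is why $X$ being a prefix code rules out nonempty elements of $X^{-1}X$ (and dually for $Y$), which is immediate from the definition.
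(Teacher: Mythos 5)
Your proof is correct and is essentially the same as the paper's: both show that the prefix (resp.\ suffix) hypothesis forces $X^{-1}X\setminus\{\varepsilon\}=\emptyset$ (resp.\ $YY^{-1}\setminus\{\varepsilon\}=\emptyset$), then invoke Lemma~\ref{L:XYUnam} for unambiguity and Lemma~\ref{L:XYeCodeCode} to conclude $(X,Y)$ is an alternative code. You merely spell out the prefix-code step a bit more explicitly than the paper does.
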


\begin{proof}
	If $X$ is a prefix code, then $X^{-1}X =\{\varepsilon\}$, and therefore $X^{-1}X \cap YY^{-1} \setminus \{\varepsilon\} = \emptyset$. By Lemma~\ref{L:XYUnam}, the product $XY$ is unambiguous. Thus, by Lemma~\ref{L:XYeCodeCode}, $(X, Y)$ is an alternative code, and hence $Z = XY$ is an alt-induced code. Similarly for the case when $Y$ is a suffix code. \qed
\end{proof}

\begin{corollary} \label{C:notSCode}
	Let $Z \subseteq A^+$ be a code such that all the words of $Z$ have length greater than or equal to 2. Then, if all the words of $Z$ begin (end) with the same letter in $A$, then $Z$ is an alt-induced code.
\end{corollary}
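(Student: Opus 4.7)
The plan is to reduce directly to Proposition~\ref{P:XYicode-XpYs} by exhibiting an appropriate factorization $Z = XY$ in which $X$ is a prefix code or $Y$ is a suffix code.

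Suppose first that every word of $Z$ begins with the same letter $a \in A$. Since each $z \in Z$ has length at least $2$, we can write $z = a z'$ with $z' \in A^+$. I would set
\[
X = \{a\}, \qquad Y = a^{-1}Z.
\]
Then $X, Y$ are non-empty subsets of $A^+$ (the latter because $|z| \geq 2$ forces $z' \neq \varepsilon$), and clearly $Z = XY$. The singleton $\{a\}$ is trivially a prefix code, so Proposition~\ref{P:XYicode-XpYs} applies and $Z$ is alt-induced.

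The symmetric case, when every word of $Z$ ends with the same letter $b \in A$, is handled dually: take $X = Zb^{-1}$ and $Y = \{b\}$. Again $X \subseteq A^+$ because every word in $Z$ has length $\geq 2$, we have $Z = XY$, and $Y$ is a (singleton) suffix code; so Proposition~\ref{P:XYicode-XpYs} again yields that $Z$ is an alt-induced code.

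There is no real obstacle here: the only point that must be checked is that the hypothesis $|z| \geq 2$ is used precisely to guarantee $Y \neq \emptyset$ (resp.\ $X \neq \emptyset$) and $Y \subseteq A^+$ (resp.\ $X \subseteq A^+$), so that the factorization conforms to the assumptions of Proposition~\ref{P:XYicode-XpYs}.
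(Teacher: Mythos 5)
Your proof is correct and follows exactly the same route as the paper: factor $Z = \{a\}\,(a^{-1}Z)$ (resp.\ $(Zb^{-1})\{b\}$), note the singleton factor is a prefix (resp.\ suffix) code, and invoke Proposition~\ref{P:XYicode-XpYs}. The only difference is that you make explicit the role of the hypothesis $|z|\ge 2$ in ensuring the second factor lies in $A^+$, which the paper leaves implicit.
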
  

\begin{proof}
	Suppose that all the words of $Z$ begin with the same letter $a$ in $A$. Then, we have $Z = \{a\}Y$ with $\emptyset \neq Y \subseteq A^+$. Since $\{a\}$ is a prefix code, by Proposition~\ref{P:XYicode-XpYs}, $Z$ is an alt-induced code. The argument is similar for the other case of the corollary. \qed 
\end{proof}

\begin{remark} \label {R:SingAlph} If $A$ is a one-letter alphabet, $|A|=1$, then, as well-known, every subset $Z$ of $A^+$ is a code if and only if it is a singleton, $Z= \{w\}$. Therefore, by Remark~\ref{R:notSCode} and Corollary~\ref{C:notSCode}, $Z$ is an alt-induced code if and only if $w$ has the length greater than or equal to 2, $|w| \geq 2$.
\end{remark}

\begin{corollary} 
	Let $Z$ be a code over $A$ such that $Z = XY$ with $\emptyset \neq X, Y \subseteq A^+$. If $|X| = 2$ or $|Y| = 2$, then $Z$ is an alt-induced code. 
\end{corollary}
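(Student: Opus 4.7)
The plan is to reduce to Proposition~\ref{P:XYicode-XpYs} by either using the given factorization or refactorizing $Z$ so that the left factor is a prefix code (resp.\ the right factor is a suffix code). I will handle the case $|X|=2$; the case $|Y|=2$ is completely symmetric, swapping the roles of ``prefix'' and ``suffix''.

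So suppose $X = \{x_1, x_2\}$. If $X$ happens to be a prefix code, then Proposition~\ref{P:XYicode-XpYs} applies directly and $Z$ is alt-induced. Otherwise, since $|X|=2$, one of the two words must be a proper prefix of the other. Without loss of generality write $x_1 = x_2 u$ for some $u \in A^+$. The key observation is then the identity
\[
Z = XY = x_1 Y \cup x_2 Y = x_2 u Y \cup x_2 Y = \{x_2\}\bigl(Y \cup uY\bigr).
\]
Set $X' = \{x_2\}$ and $Y' = Y \cup uY$. Then $X', Y'$ are non-empty subsets of $A^+$ with $Z = X'Y'$, and $X'$ is a singleton, hence trivially a prefix code. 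Since $Z$ is a code by hypothesis, Proposition~\ref{P:XYicode-XpYs} now gives that $Z$ is an alt-induced code.

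For the case $|Y|=2$, write $Y = \{y_1, y_2\}$. If $Y$ is a suffix code, invoke Proposition~\ref{P:XYicode-XpYs}. Otherwise one word is a proper suffix of the other, say $y_1 = v y_2$ with $v \in A^+$, and the analogous refactorization
\[
Z = XY = Xy_1 \cup Xy_2 = Xvy_2 \cup Xy_2 = \bigl(X \cup Xv\bigr)\{y_2\}
\]
exhibits $Z$ as $X''Y''$ with $Y''=\{y_2\}$ a singleton (hence a suffix code), and Proposition~\ref{P:XYicode-XpYs} concludes.

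There is no real obstacle here beyond spotting the refactorization trick: the finiteness assumption $|X|=2$ forces a dichotomy (prefix code vs.\ one word being a prefix of the other), and in the non-prefix-code case the longer word $x_1 = x_2 u$ can be absorbed into the right factor, collapsing $X$ to a singleton. The argument would not go through for $|X| \ge 3$ because absorbing one word can leave the remaining set still not a prefix code.
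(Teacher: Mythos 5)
Your proof is correct and follows essentially the same refactorization strategy as the paper's: when $X = \{x, xu\}$ is not a prefix code, rewrite $Z = \{x\}(Y \cup uY)$ so that the left factor is a singleton and hence a prefix code. The only cosmetic difference is that you invoke Proposition~\ref{P:XYicode-XpYs} directly, whereas the paper routes through Corollary~\ref{C:notSCode} (``all words begin with the same letter''), which is itself just an instance of that proposition.
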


\begin{proof}
	We treat the case $|X| = 2$. If $X$ is a prefix code, then by Proposition~\ref{P:XYicode-XpYs}, $Z$ is an alt-induced code. Otherwise, since $|X| = 2$, we have $X = \{x, xu\}$ for some $x, u \in A^+$. Therefore, $Z = XY = \{x, xu\}Y = \{x\}\{Y, uY\}$, that is all the words of $Z$ begin with the same letter in $A$. Thus, by Corollary~\ref{C:notSCode}, $Z$ is also an alt-induced code. For the case $|Y| = 2$, the argument is similar.  \qed 
\end{proof}

\begin{proposition} \label{L:Zn-icode}
	If $Z \subseteq A^+$ is an alt-induced code, then $Z^n$ is also an alt-induced code for all integers $n \geq 1$.
\end{proposition}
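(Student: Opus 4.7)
The plan is to use the decomposition $Z^n = Z \cdot Z^{n-1}$ for $n \geq 2$, since the case $n=1$ is trivial (the hypothesis directly says $Z$ is alt-induced). To conclude that $Z^n$ is alt-induced by Definition~\ref{D:icode}, it suffices, in view of Lemma~\ref{L:XYeCodeCode}, to show that the pair $(Z, Z^{n-1})$ is an alternative code, i.e., that $Z^n = Z \cdot Z^{n-1}$ is a code and that this product is unambiguous.

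First, since $Z$ is an alt-induced code, $Z$ is in particular a code; so by Lemma~\ref{L:XnCode}, $Z^n$ is a code. Moreover, $Z \subseteq A^+$ and $n-1 \geq 1$ guarantee that both $Z$ and $Z^{n-1}$ are non-empty subsets of $A^+$, so the pair $(Z, Z^{n-1})$ satisfies the basic size requirements of Definition~\ref{D:icode}.

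Next, I would verify unambiguity of the product $Z \cdot Z^{n-1}$ directly from the code property of $Z$. Suppose $z_1 w_1 = z_2 w_2$ with $z_1, z_2 \in Z$ and $w_1, w_2 \in Z^{n-1}$. Writing $w_1 = u_1 u_2 \cdots u_{n-1}$ and $w_2 = v_1 v_2 \cdots v_{n-1}$ with $u_i, v_i \in Z$, the equality becomes an identity between two $n$-fold products of words in $Z$. Since $Z$ is a code, these factorizations must coincide, forcing $z_1 = z_2$ and $u_i = v_i$ for every $i$, whence $w_1 = w_2$. Thus the product is unambiguous. Applying Lemma~\ref{L:XYeCodeCode} to $(Z, Z^{n-1})$ then yields that this pair is an alternative code inducing $Z^n$, which is the desired conclusion.

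There is no serious obstacle: the argument leverages only the code property of $Z$ (not its finer alt-induced structure) together with Lemmas~\ref{L:XnCode} and~\ref{L:XYeCodeCode}. The only subtlety worth stating carefully is that the unambiguity of $Z \cdot Z^{n-1}$ is essentially the special case of the defining property of a code where one groups the first factor alone versus the remaining $n-1$ factors together.
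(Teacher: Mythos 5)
Your proof is correct and follows essentially the same route as the paper: both decompose $Z^n$ as a product of $Z$ with $Z^{n-1}$ (the paper writes $Z^{n-1}\cdot Z$, you write $Z\cdot Z^{n-1}$, which is immaterial), invoke Lemma~\ref{L:XnCode} for the code property, observe that unambiguity of the product follows from $Z$ being a code, and conclude via Lemma~\ref{L:XYeCodeCode}. The only cosmetic difference is that the paper dresses the argument as an induction whose hypothesis it never actually uses, whereas you give the direct argument and spell out the unambiguity verification explicitly.
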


\begin{proof}
	Since $Z$ is a code, by Lemma~\ref{L:XnCode}, $Z^n$ is a code for all integers $n \geq 1$. Now we prove by induction that $Z^n$ is also an alt-induced code. Indeed, for $n=1$ it is true by assumption. Suppose the assertion is already true for $n-1$ with $n \geq 2$. Put $X = Z^{n-1}, Y = Z$. Then, on one hand, as mentioned above, $XY = Z^n$ is a code. On the other hand, the unambiguity of the product $XY$ follows directly from the fact that $Z$ is a code. Thus, by Lemma~\ref{L:XYeCodeCode}, $(X,Y)$ is an alternative code and therefore $Z^n = XY$ is an alt-induce code.  \qed 
\end{proof}

	The following example shows that the product of two alt-induced codes is not, in general, an alt-induced code.

\begin{example}
	Let us consider the sets $Z = \{aa, baa\} = \{a, ba\}\{a\}$ and $Z' = \{aa, aab\} = \{a\}\{a, ab\}$ which are, as easily seen, alt-induced codes over $A = \{a, b\}$. Put 
	$$R = ZZ' = \{a^4, a^4b, ba^4, ba^4b\}.$$
The word $w = a^4ba^4b$, for example, has two distinct factorizations: $w = (a^4)(ba^4b) = (a^4b)(a^4b)$ on $R$.
Thus $R$ is not a code, and therefore not an alt-induced code either.
\end{example}

	For prefix (suffix, bifix) alt-induced codes, we have however

\begin{proposition} \label{P:icode-Clocat}
	The product of two prefix (suffix, bifix) alt-induced codes is a prefix (resp., suffix, bifix) alt-induced code.
\end{proposition}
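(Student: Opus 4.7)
The plan is to prove all three cases at once by taking the simplest possible alternative factorization, namely $Z_1 Z_2 = X Y$ with $X = Z_1$ and $Y = Z_2$, and then appealing to Proposition~\ref{P:XYicode-XpYs}. The point is that whenever either $Z_1$ is a prefix code or $Z_2$ is a suffix code, we can avoid unpacking the alternative codes that originally induce $Z_1$ and $Z_2$; the hypothesis that each $Z_i$ is alt-induced plays no direct role, only the prefix/suffix/bifix property does.

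First, I would check that $Z_1 Z_2$ is itself a prefix (respectively suffix, bifix) code. In the prefix case this is exactly Lemma~\ref{L:CpClosed}(i). In the suffix case I would record the dual statement — the product of two suffix codes is a suffix code — which is proved by a short, entirely symmetric argument to that of Lemma~\ref{L:CpClosed}(i) (or, equivalently, by word reversal). The bifix case is just the conjunction of the prefix and suffix cases. In particular, in all three cases $Z_1 Z_2$ is a code.

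Next, I would apply Proposition~\ref{P:XYicode-XpYs} to the factorization $Z_1 Z_2 = X Y$ with $X = Z_1$ and $Y = Z_2$. In the prefix and bifix cases, $X = Z_1$ is a prefix code; in the suffix and bifix cases, $Y = Z_2$ is a suffix code. Either way, the hypothesis of Proposition~\ref{P:XYicode-XpYs} is met, so $Z_1 Z_2$ is an alt-induced code. Combined with the prefix/suffix/bifix property established in the previous step, this yields that $Z_1 Z_2$ is a prefix (respectively suffix, bifix) alt-induced code.

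There is no substantive obstacle. The only conceptual point worth flagging is that the witnessing alternative code for $Z_1 Z_2$ is the coarse pair $(Z_1, Z_2)$ itself, not any finer pair built from the inducing alternative codes of $Z_1$ and $Z_2$; once this is seen, the argument reduces to a direct invocation of Proposition~\ref{P:XYicode-XpYs}.
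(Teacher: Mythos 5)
Your proposal is correct and takes essentially the same route as the paper: establish the prefix (resp.\ suffix, bifix) property of the product via Lemma~\ref{L:CpClosed} and its dual, then invoke Proposition~\ref{P:XYicode-XpYs} with the coarse factorization $Z_1Z_2 = Z_1 \cdot Z_2$. The paper writes out only the prefix case and says the others are similar; you make explicit the same key observation it relies on, namely that the alt-induced hypothesis on the factors is never used directly, only their prefix/suffix structure.
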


\begin{proof}
	We treat only the case of prefix alt-induced codes. For the other cases the arguments are similar. Let $Z$ and $Z'$ be two prefix alt-induced codes. Then, by Lemma~\ref{L:CpClosed}(i), $ZZ'$ is a prefix code. Since $Z$ is a prefix code, by Proposition~\ref{P:XYicode-XpYs}, $ZZ'$ is a prefix alt-induced code.   \qed 
\end{proof}

Next, we exhibit characterizations for prefix (suffix, bifix) alt-induced codes. For this, we need two more auxiliary propositions.

\begin{proposition} \label{P:XYCpsb}
	Let $X$ and $Y$ be non-empty subsets of $A^+$.
\begin{enumerate}
\renewcommand{\labelenumi} {(\roman{enumi})} 
\item If $X$ and $Y$ are prefix (maximal prefix) codes, then $XY$ and $YX$ are prefix (resp., maximal prefix) codes, and the products $XY$ and $YX$ are unambiguous;
\item If $X$ and $Y$ are suffix (maximal suffix) codes, then $XY$ and $YX$ are suffix (resp., maximal suffix) codes, and the products $XY$ and $YX$ are unambiguous;
\item If $X$ and $Y$ are bifix (maximal bifix thin) codes, then $XY$ and $YX$ are bifix (resp., maximal bifix thin) codes, and the products $XY$ and $YX$ are unambiguous.
\end{enumerate}
\end{proposition}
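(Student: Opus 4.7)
The plan is to handle the three parts in sequence, reducing each to lemmas already available in the paper.

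For part (i), the code-theoretic part (that $XY$ and $YX$ are prefix, resp.\ maximal prefix, codes whenever $X$ and $Y$ are) is a direct application of Lemma~\ref{L:CpClosed}(i), once to the pair $(X,Y)$ and once to the pair $(Y,X)$. For the unambiguity assertions I would invoke Lemma~\ref{L:XYUnam}: since $X$ is a prefix code, $X^{-1}X=\{\varepsilon\}$, so $X^{-1}X\cap YY^{-1}\setminus\{\varepsilon\}=\emptyset$ and $XY$ is unambiguous; and since $Y$ is a prefix code, $Y^{-1}Y=\{\varepsilon\}$, giving unambiguity of $YX$ in the same way.

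For part (ii), I would dualize through the reversal map $w\mapsto w^R$. If $X,Y$ are suffix codes, then $X^R,Y^R$ are prefix codes, so by Lemma~\ref{L:CpClosed}(i) the product $Y^RX^R$ is a prefix code; since $(XY)^R=Y^RX^R$, it follows that $XY$ is a suffix code, and the same argument applied to $(YX)^R=X^RY^R$ yields that $YX$ is a suffix code. The maximality statement transfers in exactly the same fashion, using the maximal-prefix part of Lemma~\ref{L:CpClosed}(i). Unambiguity then follows from Lemma~\ref{L:XYUnam}: $Y$ being a suffix code gives $YY^{-1}=\{\varepsilon\}$, so $XY$ is unambiguous, and analogously $XX^{-1}=\{\varepsilon\}$ gives unambiguity of $YX$.

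For part (iii), the bifix case reduces immediately to (i) and (ii): a bifix code is both a prefix code and a suffix code, so if $X,Y$ are bifix then $XY$ is simultaneously a prefix code (by (i)) and a suffix code (by (ii)), hence bifix, and similarly for $YX$; the unambiguity of both products has already been established. For the maximal bifix thin case, I would first use Lemma~\ref{L:ThinProd} to conclude that $XY$ and $YX$ remain thin. Then, appealing to Lemma~\ref{L:ThinMb}, a maximal bifix thin code is simultaneously a maximal prefix code and a maximal suffix code; applying the maximal-prefix parts of (i) and the maximal-suffix parts of (ii) shows that $XY$ and $YX$ are simultaneously maximal prefix and maximal suffix. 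A second application of Lemma~\ref{L:ThinMb}, legitimate because $XY$ and $YX$ are thin, yields that they are maximal bifix codes, completing the proof.

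The only non-routine step is the dualization in part (ii), but it is entirely mechanical via reversal; there is no substantive obstacle, and the whole proposition collects together facts that follow smoothly from the preliminaries already in place.
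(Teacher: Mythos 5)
Your proof is correct and follows essentially the same route as the paper: part (i) via Lemma~\ref{L:CpClosed}(i) and Lemma~\ref{L:XYUnam}, part (ii) as the dual of (i), and part (iii) by combining (i), (ii), Lemma~\ref{L:ThinProd} and Lemma~\ref{L:ThinMb}. The only cosmetic difference is that you make the dualization in (ii) explicit via the reversal map, whereas the paper simply asserts the argument is ``similar''; your version is a clean way to justify that remark.
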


\begin{proof}
	(i) If $X$ and $Y$ are prefix (maximal prefix) codes, then, by Lemma~\ref{L:CpClosed}(i), $XY$ and $YX$ are prefix (resp., maximal prefix) codes. Again because $X$ is a prefix code, it follows that $X^{-1}X =\{\varepsilon\}$, and therefore $X^{-1}X \cap YY^{-1} \setminus \{\varepsilon\} = \emptyset$. By Lemma~\ref{L:XYUnam}, the product $XY$ is unambiguous. Similarly, since $Y$ is a prefix code, it follows that $YX$ is unambiguous.

	(ii) The proof is similar as above in the item (i).

	(iii) If $X$ and $Y$ are bifix codes, then, by (i) and (ii) of the proposition, $XY$ and $YX$ are bifix codes and the products $XY$ and $YX$ are unambiguous. Now assume moreover that $X$ and $Y$ are maximal bifix codes and thin. Then, by Lemma~\ref{L:ThinMb}, $X$ and $Y$ are both maximal prefix codes as well as maximal suffix codes. Again by (i) and (ii) of the proposition, $XY$ and $YX$ are both maximal prefix codes as well as maximal suffix codes. By Lemma~\ref{L:ThinProd}, $XY$ and $YX$ are thin. Again by Lemma~\ref{L:ThinMb}, they are both maximal bifix codes.\qed 
\end{proof}

As the converse of Proposition~\ref{P:XYCpsb} we have

\begin{proposition} \label{P:XYCpsb-Conv}
	Let $X$ and $Y$ be non-empty subsets of $A^+$.
\begin{enumerate}
\renewcommand{\labelenumi} {(\roman{enumi})} 
\item If $XY$ and $YX$ are prefix (maximal prefix) codes, then $X$ and $Y$ are prefix (resp., maximal prefix) codes;
\item If $XY$ and $YX$ are suffix (maximal suffix) codes, then $X$ and $Y$ are suffix (resp., maximal suffix) codes;
\item If $XY$ and $YX$ are bifix (maximal bifix thin) codes, then $X$ and $Y$ are bifix (resp., maximal bifix thin) codes.
\end{enumerate}
\end{proposition}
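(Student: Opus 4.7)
My plan is to apply Lemma~\ref{L:CpClosed} and its suffix analogue to both factorizations $XY$ and $YX$ so that $X$ and $Y$ are handled symmetrically, then to combine the prefix and suffix conclusions to reach the bifix statement via Lemma~\ref{L:ThinMb}.

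For part (i), a double invocation of Lemma~\ref{L:CpClosed}(ii) settles the non-maximal case: $XY$ being a prefix code forces $Y$ to be a prefix code, and $YX$ being a prefix code forces $X$ to be a prefix code. With $X$ now known to be a prefix code and $XY$ assumed to be a maximal prefix code, Lemma~\ref{L:CpClosed}(iii) delivers that both $X$ and $Y$ are maximal prefix codes. Part (ii) is the mirror argument: Lemma~\ref{L:CpClosed}(ii) already records the suffix half, and the suffix analogue of Lemma~\ref{L:CpClosed}(iii) follows by reversing every word (so that $XY$ becomes $\widetilde{Y}\widetilde{X}$ and prefix / maximal prefix turns into suffix / maximal suffix). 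Applied to $XY$ and $YX$, this yields that $X$ and $Y$ are both suffix codes, respectively maximal suffix codes.

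Part (iii) combines (i) and (ii). Since \emph{bifix} means \emph{prefix and suffix}, the unqualified bifix case is immediate. For the maximal bifix thin case I would proceed in three steps. First, the remark just preceding Lemma~\ref{L:ThinProd} shows that thinness descends through products, so from $XY$ thin we get $X$ and $Y$ thin. Second, Lemma~\ref{L:ThinMb} applied to the thin maximal bifix codes $XY$ and $YX$ shows that each of them is simultaneously a maximal prefix code and a maximal suffix code, and then the maximal statements in (i) and (ii) transfer these two properties to $X$ and $Y$. Third, a second application of Lemma~\ref{L:ThinMb}, now in the reverse direction, converts the thin $X$ and $Y$—each already both maximal prefix and maximal suffix—into maximal bifix codes. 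The one piece of scaffolding I am relying on without separate proof is the suffix counterpart of Lemma~\ref{L:CpClosed}(iii), which is standard and is in fact what the paper implicitly uses already in Proposition~\ref{P:XYCpsb}; this is where I expect the only real bookkeeping, since everything else is a direct quotation of the cited lemmas.
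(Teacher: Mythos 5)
Your proof is correct and follows essentially the same route as the paper's: a double application of Lemma~\ref{L:CpClosed}(ii), then Lemma~\ref{L:CpClosed}(iii) for the maximal case, the mirror argument for suffix codes, and the combination via Lemma~\ref{L:ThinMb} together with the remark on thinness for part~(iii). The one item you flag as assumed without proof---the suffix counterpart of Lemma~\ref{L:CpClosed}(iii)---is also left implicit by the paper, which simply writes that the proof of (ii) is ``similar.''
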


\begin{proof}
	(i) If $XY$ and $YX$ are prefix codes, then, by Lemma~\ref{L:CpClosed}(ii), $Y$ and $X$ are prefix codes. If $XY$ and $YX$ are maximal prefix codes, then, again by Lemma~\ref{L:CpClosed}(ii) and then by Lemma~\ref{L:CpClosed}(iii), $X$ and $Y$ are maximal prefix codes.

	(ii) The proof is similar as for the item (i).

	(iii) Firstly, by (i) and (ii) of the proposition, $X$ and $Y$ are bifix codes if so are $YX$ and $XY$. Now, suppose that $XY$ and $YX$ are maximal bifix codes and thin. Then, by Lemma~\ref{L:ThinMb}, $XY$ and $YX$ are both maximal prefix and maximal suffix codes. Again by (i) and (ii) of the proposition, $X$ and $Y$ are both maximal prefix and maximal suffix codes. The thinness of $XY$ implies evidently the thinness of $X$ and $Y$. Hence, again by Lemma~\ref{L:ThinMb}, $X$ and $Y$ are maximal bifix codes.
\qed 
\end{proof}

As an immediate consequence of Proposition~\ref{P:XYCpsb} and Proposition~\ref{P:XYCpsb-Conv} we can now formulate the following result which resumes characterizations for prefix (suffix, bifix) alt-induced codes.

\begin{theorem} \label{T:iCpsb-Char}
	Let $X$ and $Y$ be non-empty subsets of $A^+$.
\begin{enumerate}
\renewcommand{\labelenumi} {(\roman{enumi})} 
\item $XY$ and $YX$ are prefix (maximal prefix) alt-induced codes if and only if $X$ and $Y$ are prefix (resp., maximal prefix) codes;
\item $XY$ and $YX$ are suffix (maximal suffix) alt-induced codes if and only if $X$ and $Y$ are suffix (resp., maximal suffix) codes;
\item $XY$ and $YX$ are bifix (maximal bifix thin) alt-induced codes if and only if $X$ and $Y$ are bifix (resp., maximal bifix thin) codes.
\end{enumerate}
\end{theorem}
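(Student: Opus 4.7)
My plan is to treat each of the three items as a direct synthesis of Proposition~\ref{P:XYCpsb} (providing the sufficiency) and Proposition~\ref{P:XYCpsb-Conv} (providing the necessity), glued together with the definition of alt-induced codes and Lemma~\ref{L:XYeCodeCode}. I will handle item (i) in detail and then remark that items (ii) and (iii) go through by the same template, since each of the cited propositions already has parts for prefix, suffix, and bifix (as well as their maximal, thin versions).

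For the \emph{necessity} direction of (i), assume that $XY$ and $YX$ are prefix (resp.\ maximal prefix) alt-induced codes. By Definition~\ref{D:icode}, being alt-induced does not by itself add anything beyond being a code; so $XY$ and $YX$ are, in particular, prefix (resp.\ maximal prefix) codes. Then Proposition~\ref{P:XYCpsb-Conv}(i) applies verbatim and yields that $X$ and $Y$ are prefix (resp.\ maximal prefix) codes.

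For the \emph{sufficiency} direction of (i), assume that $X$ and $Y$ are prefix (resp.\ maximal prefix) codes. Proposition~\ref{P:XYCpsb}(i) gives two things simultaneously: first, $XY$ and $YX$ are prefix (resp.\ maximal prefix) codes; second, both products $XY$ and $YX$ are unambiguous. By Lemma~\ref{L:XYeCodeCode}, the unambiguity of $XY$ together with $XY$ being a code means $(X,Y)$ is an alternative code, hence $XY$ is an alt-induced code via Definition~\ref{D:icode}; symmetrically $(Y,X)$ is an alternative code and $YX$ is an alt-induced code. Combining with the already established prefix (resp.\ maximal prefix) property, $XY$ and $YX$ are prefix (resp.\ maximal prefix) alt-induced codes.

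Items (ii) and (iii) follow the exact same pattern: for sufficiency, invoke the corresponding part of Proposition~\ref{P:XYCpsb} to get the suffix/bifix (and maximality, thinness when applicable) properties of $XY,YX$ and the unambiguity of both products, then conclude via Lemma~\ref{L:XYeCodeCode} and Definition~\ref{D:icode}; for necessity, strip off the ``alt-induced'' label and invoke the corresponding part of Proposition~\ref{P:XYCpsb-Conv}. Since both preceding propositions are already proved and since the only genuinely new ingredient is the passage from ``$XY$ unambiguous and a code'' to ``$XY$ is alt-induced'' (which is literally Lemma~\ref{L:XYeCodeCode}), no step here is an obstacle; the only thing to be careful about is keeping track of the symmetric roles of $XY$ and $YX$ so that both are certified as alt-induced in the sufficiency direction. \qed
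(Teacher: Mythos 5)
Your proposal is correct and follows exactly the route the paper intends: the paper states the theorem as an ``immediate consequence'' of Propositions~\ref{P:XYCpsb} and~\ref{P:XYCpsb-Conv}, and your write-up simply makes explicit the bridge (via Lemma~\ref{L:XYeCodeCode} and Definition~\ref{D:icode}) from ``unambiguous product of two codes'' to ``alt-induced code'' in the sufficiency direction, and the trivial stripping of the alt-induced label in the necessity direction. Nothing is missing and no step differs in substance from the paper's argument.
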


	Note that Theorem~\ref{T:iCpsb-Char} provides us with a tool to construct (maximal) alt-induced codes. For example, just by taking product of two prefix (maximal prefix) codes, we obtain a prefix (maximal prefix) alt-induced code. As such, in some sense, the class of alt-induced codes is quite large.

\begin{example}
	We consider the sets $X = \{a^nba \ | \ n \geq 1\}, Y = \{b^ma \ | \ m \geq 1\}$ over $A = \{a, b\}$. Clearly, $X$ and $Y$ are prefix codes. Hence, by Theorem~\ref{T:iCpsb-Char}(i), $XY = \{a^nbab^ma \ | \ n, m \geq 1\}$ and $YX = \{b^ma^nba \ | \ n \geq 2, m \geq 1\}$ are prefix alt-induced codes.
\end{example}

\section{Strong alt-induced codes}

In this section we consider a special subclass of alt-induced codes which is introduced in the following definition. 

\begin{definition} \label{D:sicode}
	Let $X$ and $Y$ be non-empty subsets of $A^+$.
\begin{enumerate}
\renewcommand{\labelenumi} {(\roman{enumi})} 
\item An alternative code $(X, Y)$ is called a {\it strong alternative code} if it satisfies the following conditions
		$$X^{-1}(XY) \subseteq Y \ \ (1), \ \ (XY)Y^{-1} \subseteq X \ \ (2).$$
\item An alt-induced code $Z$ over $A$ is called a {\it strong alt-induced code} if there exists a strong alternative code $(X,Y)$ generating it, $Z=XY$.
\end{enumerate}
\end{definition}

\begin{remark}
	It is easy to check that the conditions (1) and (2) in Definition~\ref{D:sicode} are equivalent to the conditions 
		$$X^{-1}(XY) = Y \ \ (1'), \ \ (XY)Y^{-1} = X \ \ (2')$$
and these, in its turn, are equivalent to
		$$\forall x \in X: x^{-1}(XY) = Y \ \ (1''), \ \ \forall y \in Y: (XY)y^{-1} = X \ \ (2'').$$
\end{remark}

\begin{example}	\label{E:sicode}
	Consider the sets $X = \{a^nb \ | \ n \geq 1\}$ and $Y = \{b, bab\}$ over $A = \{a, b\}$. Then, we have $XY = \{a^nbb, a^nbbab \ | \ n \geq 1\}$. By Lemma~\ref{L:SP1953}, $XY$ is a code because $U_1 = \{ab\}, U_2 = \{b, bab\}, U_3 = \emptyset$.
Since $X$ is a prefix code, by Proposition~\ref{P:XYCpsb}(i), the product $XY$ is unambiguous. Hence, by Lemma~\ref{L:XYeCodeCode}, $(X, Y)$ is an alternative code, and therefore $XY$ is an alt-induced code.
On the other hand, it is easy to see that, for all $n\geq 1, a^nbba \in (XY)Y^{-1} \setminus X$, which means that the inclusion (2) in Definition~\ref{D:sicode} is not satisfied. Thus, $(X, Y)$ is not a strong alternative code, and therefore $XY$ is not a strong alt-induced code.
\end{example}

The following result is basic in characterizing the strong alternative codes and, therefore, the strong alt-induced codes.

\begin{proposition} \label{P:XY-siCode}
	Let $X$ and $Y$ be non-empty subsets of $A^+$.
\begin{enumerate}
\renewcommand{\labelenumi} {(\roman{enumi})} 
\item If $X$ is a prefix code, $Y$ is a suffix code and $XY$ is a code, then $(X, Y)$ is a strong alternative code;
\item If $(X, Y)$ is a strong alternative code, then $X$ is a prefix code and $Y$ is a suffix code.
\end{enumerate}
\end{proposition}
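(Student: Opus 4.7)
\medskip\noindent
\textbf{Proof plan.} The plan is to treat the two directions separately, using Lemma~\ref{L:XYeCodeCode} and Lemma~\ref{L:XYUnam} as the principal tools.

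For part (i), I would first observe that since $X$ is a prefix code, $X^{-1}X=\{\varepsilon\}$, and therefore $X^{-1}X\cap YY^{-1}\setminus\{\varepsilon\}=\emptyset$, so Lemma~\ref{L:XYUnam} gives that the product $XY$ is unambiguous. Combined with the hypothesis that $XY$ is a code, Lemma~\ref{L:XYeCodeCode} yields that $(X,Y)$ is an alternative code. It then remains to verify the two inclusions of Definition~\ref{D:sicode}. For inclusion $(1)$, I would take an arbitrary $u\in X^{-1}(XY)$; by definition there exist $x,x'\in X$ and $y'\in Y$ with $xu=x'y'$, so one of $x,x'$ is a prefix of the other. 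Since $X$ is a prefix code this forces $x=x'$, and therefore $u=y'\in Y$. The inclusion $(2)$ is obtained by the completely symmetric argument based on $Y$ being a suffix code.

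For part (ii), the natural strategy is proof by contradiction, leveraging the unambiguity of $XY$ guaranteed by $(X,Y)$ being an alternative code. Suppose $X$ were not a prefix code; then there exist distinct $x,x'\in X$ with $x'=xv$ for some $v\in A^+$. Picking any $y\in Y$, I would observe that $x'y=xvy\in XY$, so $vy\in x^{-1}(XY)\subseteq Y$ by condition~$(1)$. Consequently the word $xvy=x'y$ admits the two factorizations $x\cdot vy$ and $x'\cdot y$ on $X\times Y$, which are distinct since $x\neq x'$; this contradicts the unambiguity of $XY$. A symmetric argument, starting from a hypothetical pair $y,y'\in Y$ with $y'=wy$ and using condition~$(2)$ to deduce $xw\in X$ for an arbitrary $x\in X$, shows that $Y$ must be a suffix code.

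I do not foresee any serious obstacle; the main care lies in ensuring that the two factorizations produced in part (ii) are genuinely distinct (which they are, since $v\neq\varepsilon$ forces $x\neq x'$), so that they actually contradict unambiguity rather than merely reproducing an identity, and in part (i) in correctly parsing ``$u\in X^{-1}(XY)$'' via the union definition $X^{-1}(XY)=\bigcup_{x\in X}x^{-1}(XY)$ before invoking the prefix property of $X$.
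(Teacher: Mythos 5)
Your proposal is correct and follows essentially the same approach as the paper: part (i) uses Lemmas~\ref{L:XYUnam} and~\ref{L:XYeCodeCode} together with the prefix/suffix hypotheses to verify the inclusions, and part (ii) derives a contradiction from a hypothetical prefix (suffix) violation by producing two distinct factorizations that clash with the unambiguity of $XY$. The only cosmetic differences are that you prove the inclusions in (i) directly rather than by contradiction, and in (ii) you invoke the unambiguity contradiction straight from inclusion~(1) instead of first showing inclusion~(1) fails; the underlying reasoning is identical.
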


\begin{proof}
	(i) Suppose $X$ is a prefix code, $Y$ is a suffix code and $XY$ is a code. First, because $X$ is a prefix code, the product $XY$ is unambiguous. Thus, by Lemma~\ref{L:XYeCodeCode}, $(X, Y)$ is an alternative code. Now, we will prove that $(X, Y)$ is, moreover, a strong alternative code, that is we have to show that the inclusions (1) and (2) in Definition~\ref{D:sicode} must be satisfied. Indeed, assume the contrary that the inclusion (1), for example, is not true.  Then there must exist some word $u$ such that $u \in X^{-1}(XY)$ but $u \notin Y$. From $u \in X^{-1}(XY)$, it follows that there exist $x, x' \in X$ and $y \in Y$ such that $xu = x'y$. This, again because $X$ is prefix, implies $u = y$ and therefore $u \in Y$, a contradiction. Thus, the inclusion (1) must be true.
Similarly, the assumption that the inclusion (2) is not satisfied will also lead to a contradiction. As a consequence it must be true and therefore $(X, Y)$ is really a strong alternative code.

	(ii) Suppose $(X, Y)$ is a strong alternative code. Now assume that $X$ is not a prefix code. Then, there must exist $x, x' \in X$ such that $x = x'u$ with $u \neq \varepsilon$. Choosing any $y \in Y$ we have $xy = x'uy$ with $u \neq \varepsilon$. This implies  $uy \in X^{-1}(XY)$. If $uy \in Y$ then, from the unambiguity of the product $XY$, it follows $uy=y$, a contradiction. Thus, $uy \notin Y$ which means that the inclusion (1) in Definition~\ref{D:sicode} is not satisfied. This contradicts the assumption that $(X,Y)$ is a strong alternative code. Thus, $X$ must be a prefix code. The fact that $Y$ is a suffix code can be proved in a similar way, where the inclusion (2) is used instead of (1). \qed
\end{proof}

As an immediate consequence of Definition~\ref{D:sicode} and Proposition~\ref{P:XY-siCode}, we obtain the following characterization for strong alt-induced codes as well as strong alternative codes.

\begin{theorem} \label{T:siaCode-Char}
	Let $X$ and $Y$ be non-empty subsets of $A^+$. The following conditions are equivalent:
\begin{enumerate}
\renewcommand{\labelenumi} {(\roman{enumi})} 
\item $XY$ is a strong alt-induced code;
\item $(X, Y)$ is a strong alternative code;
\item $X$ is a prefix code, $Y$ is a suffix code and $XY$ is a code.
\end{enumerate}
\end{theorem}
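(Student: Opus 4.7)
The plan is to prove the cycle (i) $\Leftrightarrow$ (ii) $\Leftrightarrow$ (iii), with all the substantive content borrowed from Proposition~\ref{P:XY-siCode} and Lemma~\ref{L:XYeCodeCode}, while the equivalence (i) $\Leftrightarrow$ (ii) reduces to unfolding Definition~\ref{D:sicode}. There is no new combinatorial work to do; the theorem is essentially a bookkeeping assembly of the preceding results.

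First, I would dispatch (i) $\Leftrightarrow$ (ii). By Definition~\ref{D:sicode}(ii), saying that $XY$ is a strong alt-induced code (for the given decomposition $(X,Y)$) is precisely saying that $(X,Y)$ is a strong alternative code; both directions are therefore immediate from the definitions.

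Next, for (ii) $\Rightarrow$ (iii), I would invoke Proposition~\ref{P:XY-siCode}(ii) to conclude that $X$ is a prefix code and $Y$ a suffix code. The remaining clause---that $XY$ is a code---follows because a strong alternative code is, by Definition~\ref{D:sicode}(i), in particular an alternative code, and so Lemma~\ref{L:XYeCodeCode} yields at once that $XY$ is a code. Conversely, (iii) $\Rightarrow$ (ii) is exactly the content of Proposition~\ref{P:XY-siCode}(i), which promotes the three hypotheses on $X$, $Y$, and $XY$ directly to a strong alternative code.

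The only obstacle here is notational rather than mathematical: one must read condition (i) of the theorem as referring to the specific decomposition $(X,Y)$ hypothesised in the statement, rather than as the bare assertion that some decomposition of $XY$ witnesses the strong alt-induced property. Under that (evidently intended) reading, the three-way equivalence follows by chaining the implications above, and no step demands more than citing Proposition~\ref{P:XY-siCode}, Lemma~\ref{L:XYeCodeCode}, and Definition~\ref{D:sicode}.
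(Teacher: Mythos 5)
Your proof is correct and uses exactly the same ingredients as the paper, which gives no explicit argument but asserts that the theorem is ``an immediate consequence of Definition~\ref{D:sicode} and Proposition~\ref{P:XY-siCode}'': you chain (i)~$\Leftrightarrow$~(ii) from the definition, (ii)~$\Rightarrow$~(iii) from Proposition~\ref{P:XY-siCode}(ii) together with Lemma~\ref{L:XYeCodeCode}, and (iii)~$\Rightarrow$~(ii) from Proposition~\ref{P:XY-siCode}(i).

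One remark: the caveat you flag about the reading of (i) is in fact substantive rather than ``notational.'' Under the literal reading of Definition~\ref{D:sicode}(ii) --- where ``$XY$ is a strong alt-induced code'' means that \emph{some} strong alternative code generates the set $XY$, not necessarily the pair $(X,Y)$ --- the implication (i)~$\Rightarrow$~(ii) fails. For instance, over $A=\{a,b,c\}$ take $X=\{ab,\,abc\}$ and $Y=\{c^2\}$, so $XY=\{abc^2,\,abc^3\}$. This set is a strong alt-induced code, witnessed by the decomposition $\{a\}\cdot\{bc^2,\,bc^3\}$ in which $\{a\}$ is a prefix code, $\{bc^2,\,bc^3\}$ is a suffix code, and the product is a code. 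Yet $(X,Y)$ itself is not a strong alternative code: $X$ is not a prefix code, and directly $c\in X^{-1}(XY)\setminus Y$, violating inclusion~(1) of Definition~\ref{D:sicode}. So here (i) holds while (ii) and (iii) fail for the given pair. The theorem is therefore only correct under the reading you single out --- that (i) refers to the decomposition $(X,Y)$ specified in the hypothesis --- and with that (evidently intended) reading your argument is complete and follows the paper's route.
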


\begin{example}
	Consider the sets $X = \{a^nb \ | \ n \geq 1\}, Y = \{b, bba\}$ over $A = \{a, b\}$. Evidently, $X$ is a prefix code not being a suffix code while $Y$ is a suffix code not being a prefix code. Then, by virtue of Lemma~\ref{L:SP1953}, the product $XY = \{a^nbb, a^nbbba \ | \ n \geq 1\}$ is easily verified to be a code: $U_1 = \{ba\}, U_2 = \emptyset$. According to Theorem~\ref{T:siaCode-Char}, $(X,Y)$ is a strong alternative code and $XY$ is a strong alt-induced code.
\end{example}

The following example shows that in Theorem~\ref{T:siaCode-Char} the requirement \lq\lq$XY$ is a code\rq\rq\ in the item (iii) is essential.

\begin{example}
	Consider the sets $X = \{a^nb \ | \ n \geq 1\} , Y = \{ba^m \ | \ m \geq 1\}$ over $A = \{a, b\}$. Clearly, $X$ is a prefix code not being a suffix code while $Y$ is a suffix code not being a prefix code. But their product $Z = XY = \{a^nbba^m \ | \ n, m \geq 1\}$ is not a  code because the word $w = abbaaabba$, for example, has two distinct factorizations on $Z$:
$$w = (abba)(aabba) = (abbaa)(abba).$$
\end{example}

As a consequence of Theorem~\ref{T:siaCode-Char}, Lemma~\ref{L:CpClosed} and Lemma~\ref{L:ThinMb} we have the following characterization for prefix (suffix, bifix) strong alt-induced codes.

\begin{theorem} \label{T:siCpsb-Char}
	Let $X$ and $Y$ be non-empty subsets of $A^+$.
\begin{enumerate}
\renewcommand{\labelenumi} {(\roman{enumi})} 
	\item $XY$ is a prefix (maximal prefix) strong alt-induced code if and only if $X$ is a prefix (resp.,  maximal prefix) code and $Y$ is a bifix code (resp., $Y$ is both a maximal prefix code and a bifix code);
	\item $XY$ is a suffix (maximal suffix) strong alt-induced code if and only if $X$ is a bifix code (resp., $X$ is both a maximal suffix code and a bifix code) and $Y$ is a suffix (resp., maximal suffix) code;
	\item $XY$ is a bifix (maximal bifix thin) strong alt-induced code if and only if $X$ and $Y$ are bifix (resp., maximal bifix thin) codes.
\end{enumerate}
\end{theorem}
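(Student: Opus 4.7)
The plan is to combine Theorem~\ref{T:siaCode-Char} (strong alt-induced $\Leftrightarrow$ $X$ prefix, $Y$ suffix, $XY$ code) with Lemma~\ref{L:CpClosed} (prefix/maximal prefix closure properties) and Lemma~\ref{L:ThinMb} (thin maximal bifix $\Leftrightarrow$ maximal prefix and maximal suffix) to handle all three items. The three items follow the same template, and items (i) and (ii) are symmetric, so I would write out (i) and (iii) in detail and just note (ii) by duality.

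For item (i), non-maximal direction: if $XY$ is a prefix strong alt-induced code, then Theorem~\ref{T:siaCode-Char} gives $X$ prefix and $Y$ suffix, while Lemma~\ref{L:CpClosed}(ii) applied to $XY$ forces $Y$ to be prefix; so $Y$ is bifix. Conversely, if $X$ is prefix and $Y$ is bifix, then $Y$ is in particular prefix and suffix, $XY$ is prefix by Lemma~\ref{L:CpClosed}(i) (hence a code), so Theorem~\ref{T:siaCode-Char} applies. For the maximal strengthening, if $XY$ is a maximal prefix strong alt-induced code, apply Lemma~\ref{L:CpClosed}(iii) to the factorization with $X$ prefix to conclude that both $X$ and $Y$ are maximal prefix; combined with the suffix part coming from Theorem~\ref{T:siaCode-Char}, $Y$ is bifix and maximal prefix while $X$ is maximal prefix. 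The converse uses Lemma~\ref{L:CpClosed}(i) for maximal prefix codes.

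For item (iii), if $XY$ is a bifix strong alt-induced code, Theorem~\ref{T:siaCode-Char} gives $X$ prefix and $Y$ suffix, and Lemma~\ref{L:CpClosed}(ii) applied to $XY$ (prefix and suffix) gives $Y$ prefix and $X$ suffix, so both are bifix. Conversely, if $X$ and $Y$ are bifix, then Proposition~\ref{P:XYCpsb}(iii) says $XY$ is a bifix code, and Theorem~\ref{T:siaCode-Char} says it is a strong alt-induced code. For the maximal bifix thin case I would argue as follows: if $XY$ is maximal bifix and thin, Lemma~\ref{L:ThinMb} yields $XY$ maximal prefix and maximal suffix. Applying the maximal versions of (i) and (ii) gives that $X$ is maximal prefix and maximal suffix and that $Y$ is maximal prefix and maximal suffix; thinness of $XY$ descends to thinness of $X$ and $Y$, and then a second use of Lemma~\ref{L:ThinMb} upgrades each to maximal bifix thin. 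The converse uses Proposition~\ref{P:XYCpsb}(iii) to produce a maximal bifix thin $XY$ directly, and then Theorem~\ref{T:siaCode-Char} finishes.

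The argument is essentially bookkeeping: each direction is obtained by a single invocation of Theorem~\ref{T:siaCode-Char} plus one or two applications of the already-cited lemmas. The only step that requires a little care is the maximal bifix thin direction of (iii): one must not try to prove $X, Y$ maximal bifix in one shot, but rather use the already-proved maximal prefix and maximal suffix parts of (i) and (ii) together and conclude through Lemma~\ref{L:ThinMb} using the thinness inherited from $XY$. This is the step I expect to require the most care, since mixing maximality with thinness across the product is where one is most likely to slip.
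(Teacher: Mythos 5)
Your proof follows the paper's approach essentially verbatim: for (i) you reduce to Theorem~\ref{T:siaCode-Char} and apply Lemma~\ref{L:CpClosed}(i)--(iii) exactly as the paper does, you note (ii) by symmetry, and you derive (iii) from (i), (ii), and Lemma~\ref{L:ThinMb} (with inherited thinness), which is precisely the paper's one-line argument spelled out in more detail. The reasoning is correct and the extra care you flag in the thin-maximal-bifix case is handled properly.
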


\begin{proof}
	(i) Suppose $XY$ is a prefix strong alt-induced code. Then, by Theorem~\ref{T:siaCode-Char}, $X$ is a prefix code, $Y$ is a suffix code and $XY$ is a prefix code. Therefore, by Lemma~\ref{L:CpClosed}(ii), $Y$
must be a prefix code and therefore a bifix code.
Now, suppose that $XY$ is moreover a maximal prefix strong alt-induced code. Then, $XY$ is, in particular, a maximal prefix code and, by the above, $X$ is a prefix code and $Y$ is a bifix code. Therefore, by Lemma~\ref{L:CpClosed}(iii), both $Y$ and $X$ are maximal prefix codes.

	Conversely, suppose now $X$ is a prefix code and $Y$ is a bifix code. Then, by Lemma~\ref{L:CpClosed}(i), $XY$ is a prefix code. Therefore, by Theorem~\ref{T:siaCode-Char}, $XY$ is a prefix strong alt-induced code. Next, suppose moreover that $X$ is a maximal prefix code and $Y$ is both a maximal prefix code and a bifix code. Then, on one hand, by the above, $XY$ is a strong alt-induced code and, on the other hand, again by Lemma~\ref{L:CpClosed}(i), $XY$ is a maximal prefix code. Hence, $XY$ is a maximal prefix strong alt-induced code.

	(ii) The proof is similar as in the item (i).

	(iii) It follows immediately from Lemma~\ref{L:ThinMb}, and the items (i) and (ii) of the theorem.
\qed 
\end{proof}


\begin{example}
	Consider the sets $X = \{a^nb \ | \ n \geq 1\}, Y = \{ba^mb \ | \ m \geq 1\}$ over $A = \{a, b\}$. Evidently, $X$ is a prefix code and $Y$ is a bifix code. Hence, by Theorem~\ref{T:siCpsb-Char}(i), $XY = \{a^nbba^mb \ | \ n, m \geq 1\}$ is a prefix strong alt-induced code.
\end{example}

In the framework of regular languages, the strong alt-induced codes have the following interesting property.

\begin{theorem} \label{T:rsiCode-Char}
	If $Z \subseteq A^+$ is a regular strong alt-induced code, then there exists only a finite number of strong alternative codes inducing $Z$.
\end{theorem}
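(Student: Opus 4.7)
The plan is to exploit the rigidity of the strong alternative code conditions, namely $x^{-1}(XY) = Y$ for every $x \in X$ and $(XY)y^{-1} = X$ for every $y \in Y$ (conditions $(1'')$ and $(2'')$ noted in the remark following Definition~\ref{D:sicode}). The key observation is that if $(X,Y)$ is a strong alternative code with $XY = Z$, then picking any single element $x \in X$ already forces
\[
Y \;=\; x^{-1}Z,
\]
so $Y$ must be a left quotient of $Z$ by some word in $A^+$. Symmetrically, picking any $y \in Y$ forces $X = Zy^{-1}$, so $X$ must be a right quotient of $Z$ by some word in $A^+$.

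The next step invokes regularity. Since $Z$ is regular, the Myhill--Nerode style classification guarantees that the family of left quotients $\{w^{-1}Z : w \in A^*\}$ is finite (these are, up to labelling, the states of the minimal deterministic automaton recognising $Z$), and likewise the family of right quotients $\{Zw^{-1} : w \in A^*\}$ is finite. Combining this with the previous paragraph, the set of candidates for $X$ and the set of candidates for $Y$ are each finite, and therefore the collection of pairs $(X,Y)$ that can possibly induce $Z$ as a strong alternative code is contained in a finite set.

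Because every strong alternative code $(X,Y)$ with $XY = Z$ lies in this finite pool of candidate pairs, only finitely many of them can actually arise, which is the desired conclusion. The only technical point to spell out carefully in the write-up is that the conditions $(1'')$ and $(2'')$ really do pin down $Y$ (resp.\ $X$) from a \emph{single} element of $X$ (resp.\ $Y$); this is immediate from the displayed equalities, but should be stated explicitly so that the reader sees why one element suffices. I do not anticipate a serious obstacle: the argument is essentially a counting argument built on two facts already available, namely the characterisation of strong alternative codes via $(1'')$--$(2'')$ and the finiteness of the quotient sets for a regular language.
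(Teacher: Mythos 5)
Your proposal is correct, and it is in the same spirit as the paper's argument (both rest on regularity forcing a finiteness of quotient structures) but takes a cleaner route. The paper works with conditions $(1')$ and $(2')$ in the aggregated form $X^{-1}Z = Y$ and $ZY^{-1} = X$, then invokes the syntactic congruence $\cong_Z$: each $x^{-1}Z$ is a union of $\cong_Z$-classes, hence so is $X^{-1}Z = \bigcup_{x\in X}x^{-1}Z$, so $Y$ (and symmetrically $X$) is a union of classes of a finite-index congruence, and the number of such unions is finite. You instead use the pointwise forms $(1'')$ and $(2'')$ to conclude that $Y$ is literally equal to a single left quotient $x^{-1}Z$ and $X$ to a single right quotient $Zy^{-1}$, and then cite the Myhill--Nerode-style fact that a regular language has only finitely many distinct left (resp.\ right) quotients. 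The two arguments prove the same statement, but yours yields a sharper bound: if $Z$ has $n_\ell$ distinct left quotients and $n_r$ distinct right quotients, your argument bounds the number of strong alternative codes inducing $Z$ by $n_\ell \cdot n_r$, whereas the paper's union-of-classes argument only gives an exponential bound in the index of $\cong_Z$. The only presentational point worth adding is a one-line justification that $X$ and $Y$ are nonempty (so that some $x\in X$ and $y\in Y$ actually exist to quotient by), which is built into the definition of an alternative code.
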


\begin{proof} 
	Firstly, recall that, for any language $Z$ over $A$, the syntactic congruence of $Z$, denote by $\cong_Z$, is defined as follows
	$${\rm For~}  u, v\in A^*: u \cong_Z v \ {\rm if~ and~ only~ if~}  \forall x, y\in A^*: xuy \in Z \Leftrightarrow xvy \in Z.$$

Next, it is easy to verify that, for any word $x \in A^*$, $x^{-1}Z$ is a union of equivalence classes of $\cong_Z$. It follows therefore that, for any $X,Y \subseteq A^*$, $X^{-1}Z$ and $ZY^{-1}$ are also  unions of equivalence classes of $\cong_Z$.

Now, suppose $Z$ is a regular strong alt-induced code. On one hand, because $Z$ is regular, $\cong_Z$ has finite index, i.e. there exists only a finite number of equivalence classes according to the congruence $\cong_Z$. On the other hand, since $Z$ is a strong alt-induced code, there must exist  a strong alternative code $(X,Y)$ such that $Z = XY$. By Definition~\ref{D:sicode}, are verified the following equalities:
		$$X^{-1}(XY) = Y \ \ (1'), \ \ (XY)Y^{-1} = X \ \ (2').$$
Thus, by the above, $X$ and $Y$ are unions of equivalence classes of the congruence $\cong_Z$. Because $\cong_Z$ has finite index, it follows that the number of such pairs $(X, Y)$ must be finite.
 \qed 
\end{proof}

\section{Finite alt-induced codes}

This section is reserved to consider alt-induced codes in the framework of finite codes. Some properties, allowing to answer the question, whether a given finite code is an alt-induced code or not, are exhibited. An algorithm to test whether a given finite code is an alt-induced code is proposed. 
 
We denote the cardinality of a set $X$ by $|X|$. The following fact is evident.

\begin{fact} \label{F:XYunam-ZXY} Let $X$ and $Y$ be non-empty finite subsets of $A^+$. Then, the product $XY$ is unambiguous if and only if $|XY| = |X|.|Y|$.
\end{fact}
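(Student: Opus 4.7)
The plan is to reduce the statement to the elementary fact that, for finite sets, a surjection is a bijection if and only if the domain and codomain have the same cardinality. Specifically, I would introduce the evaluation map
\[
\phi \colon X \times Y \longrightarrow XY, \qquad \phi(x,y) = xy.
\]
By the very definition of the product $XY = \{xy \mid x\in X,\, y\in Y\}$, the map $\phi$ is surjective, and since $X$ and $Y$ are finite we have $|X\times Y| = |X|\cdot|Y|$.

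Next I would translate the unambiguity condition into a statement about $\phi$. By definition, $XY$ is unambiguous precisely when, for each $z\in XY$, there is exactly one pair $(x,y)\in X\times Y$ with $z = xy$; equivalently, $\phi$ is injective. So the task reduces to showing that the surjection $\phi$ between finite sets is injective iff $|X\times Y| = |XY|$.

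This last equivalence is standard: a surjection $f\colon S \to T$ between finite sets satisfies $|S|\ge |T|$, and $f$ is injective iff $|S|=|T|$. Applying it to $\phi$ gives $XY$ unambiguous $\iff |X|\cdot|Y| = |XY|$, which is the claimed Fact. No genuine obstacle arises; the only point worth stating carefully is that surjectivity of $\phi$ is automatic from the definition of $XY$, so the equivalence rests purely on counting preimages, which matches the definition of unambiguity.
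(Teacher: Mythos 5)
Your argument is correct. The paper does not actually supply a proof of this Fact — it is stated as ``evident'' — so there is nothing to compare against; but your reduction to the evaluation map $\phi\colon X\times Y\to XY$, which is surjective by definition and whose injectivity is exactly the unambiguity condition, together with the standard counting fact that a surjection between finite sets is injective iff the cardinalities agree, is precisely the argument the authors are implicitly invoking.
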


From Definition~\ref{D:icode} and Fact~\ref{F:XYunam-ZXY} it follows directly

\begin{corollary} \label{C:eCode-ZXY}
	If $Z$ is a finite alt-induced code generated by an alternative code $(X,Y)$, $Z=XY$, then $|Z| = |X|.|Y|$.
\end{corollary}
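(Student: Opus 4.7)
The plan is to chain together the two results cited in the introduction of the corollary. The key observation is that ``alt-induced'' packages two pieces of information about $(X,Y)$: by Lemma~\ref{L:XYeCodeCode}, $(X,Y)$ being an alternative code with $Z=XY$ tells us both that $XY$ is a code and, crucially for us, that the product $XY$ is unambiguous. The finiteness assumption on $Z$ then forces $X$ and $Y$ to be finite themselves, since $X,Y \subseteq A^+$ are non-empty and $XY \subseteq Z$ is finite (each $x \in X$ contributes at least $|Y|$ distinct elements to $XY$, and symmetrically for $Y$).

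Once we know $X,Y$ are non-empty finite subsets of $A^+$ and that the product $XY$ is unambiguous, Fact~\ref{F:XYunam-ZXY} applies directly and yields $|XY| = |X| \cdot |Y|$. Since $Z = XY$, this gives the desired $|Z| = |X| \cdot |Y|$.

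So the proof would be essentially two lines: invoke Lemma~\ref{L:XYeCodeCode} to extract unambiguity of the product from the hypothesis that $(X,Y)$ is an alternative code, then invoke Fact~\ref{F:XYunam-ZXY} to convert unambiguity into the cardinality equality. There is no real obstacle here; the only thing worth a brief mention is the finiteness of $X$ and $Y$, which is needed because Fact~\ref{F:XYunam-ZXY} is stated for finite sets, but this is immediate from the finiteness of $Z = XY$.
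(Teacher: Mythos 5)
Your proof is correct and takes essentially the same route as the paper: the paper states the corollary follows directly from Definition~\ref{D:icode} and Fact~\ref{F:XYunam-ZXY}, while you make the intermediate step explicit by invoking Lemma~\ref{L:XYeCodeCode} to extract unambiguity of the product from the definition of alternative code. Your brief remark on finiteness of $X$ and $Y$ is a reasonable extra care the paper leaves implicit, and it holds for the reason you give.
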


Evidently also
\begin{fact} \label{F:ZXY-ctXY}
	If $X$ and $Y$ are non-empty subsets of $A^+$ and $Z = XY$, then $X \subseteq \bigcap_{y \in Y}{Zy^{-1}}$ and $Y \subseteq \bigcap_{x \in X}{x^{-1}Z}$.
\end{fact}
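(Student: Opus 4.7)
The plan is to unfold the definitions of the quotient operations and the set product, and verify the two inclusions directly by element-chasing. Both inclusions are perfectly symmetric, so I will describe the first one in detail and simply indicate that the second follows by the dual argument.

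For the first inclusion $X \subseteq \bigcap_{y \in Y} Zy^{-1}$, I would fix an arbitrary $x \in X$ and show that $x \in Zy^{-1}$ for every $y \in Y$. Since $Y$ is non-empty, pick any such $y$; then by definition of the product, $xy \in XY = Z$. Recalling that $Zy^{-1} = \{u \in A^* \mid uy \in Z\}$, this means precisely that $x \in Zy^{-1}$. As $y$ was arbitrary in $Y$, the element $x$ lies in every term of the intersection, hence in the intersection itself.

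For the second inclusion $Y \subseteq \bigcap_{x \in X} x^{-1}Z$, I would run the dual argument: fix $y \in Y$, take any $x \in X$, observe $xy \in XY = Z$, and conclude $y \in x^{-1}Z$ by the definition $x^{-1}Z = \{v \in A^* \mid xv \in Z\}$.

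There is no real obstacle here; the statement is essentially a reformulation of $XY \subseteq Z$ in terms of left and right quotients. The only point to mention explicitly is that the assumptions $X, Y \neq \emptyset$ guarantee that the indexed intersections are not vacuously equal to $A^*$, so that the containment is a meaningful assertion (although, logically, even a vacuous intersection would trivially contain $X$ or $Y$).
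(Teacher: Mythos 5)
Your proof is correct and is exactly the straightforward element-chasing argument; the paper itself states this as an evident fact with no written proof, so your unfolding of the definitions is precisely the expected justification.
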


\begin{remark} \label{R:ZXY-ctXY}
The converse inclusions of those in Fact~\ref{F:ZXY-ctXY} are not true in general even with assumption that the sets $X, Y$ and $Z$ are finite codes. Consider, for example, the suffix codes $X = \{ab, ab^3, b^2a\}$, $X' = \{a, ab^2\}$ and the prefix codes $Y = \{a^2, ab, ba^2, bab\}$, $Y' = \{a, ba\}$ over $A = \{a, b\}$. Then, we have
\begin{quotation}
\noindent
	$Z = XY = \{aba^2, abab$, $ab^2a^2$, $ab^2ab$, $ab^3a^2$, $ab^3ab$, $ab^4a^2$, $ab^4ab$,

\hskip 1.46cm $b^2a^3$, $b^2a^2b$, $b^2aba^2$, $b^2abab\}$,

\noindent
$Z' = X'Y' = \{a^2, aba, ab^2a, ab^3a\}$.
\end{quotation}

It is easy to verify that $Z$ is a prefix code, $Z'$ is a bifix code, and $\bigcap_{y \in Y}{Zy^{-1}} = \{ab, ab^2, ab^3, b^2a\} \not\subseteq X$, $\bigcap_{y \in Y'}{Z'y^{-1}} = \{a, ab, abb\} \not\subseteq X'$.
\end{remark}

Let $A = \{a_1, a_2, \dots, a_k\}, k \geq 2$, and $Z \subseteq A^+$. For every $i$, we denote by $Z_{a_i}$ the set of all the words in $Z$ beginning with the letter $a_i$, namely
		$$Z_{a_i} = \{w \in Z \ | \ w = a_iu, u\in A^*\}$$
where $i = 1,2, \dots, k$.
Evidently, the non-empty $Z_{a_i}$ constitute a partition of $Z$. Namely 
$$Z = \bigcup_{i=1}^k{Z_{a_i}}, \ Z_{a_i} \cap Z_{a_j} = \emptyset \ (i \neq j) \ \ (3).$$

The following result, whose proof is based on Fact~\ref{F:ZXY-ctXY}, will be useful in the sequel.

\begin{proposition} \label{L:Ysub-uZ}
	Let $A = \{a_1, a_2, \dots, a_k\}, k \geq 2$, and $Z \subseteq A^+, Z = \bigcup_{i=1}^k{Z_{a_i}}$, $Z_{a_i} \neq \emptyset$, $i = 1, 2, \dots, k$.
If $Z = XY$ with $\emptyset \neq X,Y \subseteq A^+$, then $\forall w \in Z_{a_i},\exists u \in {\rm Pref} (w)$ such that $Y \subseteq u^{-1}Z_{a_i}, i = 1, 2, \dots, k$. Moreover, $Y = u^{-1}Z_{a_i}$ if $X$ is a prefix code.
\end{proposition}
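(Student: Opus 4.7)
The plan is to exploit any factorization $w = xy$ from the assumed decomposition $Z = XY$ and show that the prefix $u := x$ does the job, essentially because the factor $x$ determines the first letter of $w$.

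First, I would fix $i$ and an arbitrary $w \in Z_{a_i}$. Since $Z = XY$, there exist $x \in X$ and $y \in Y$ with $w = xy$. Because $X, Y \subseteq A^+$ we have $x \neq \varepsilon$ and $y \neq \varepsilon$, so $x$ is a non-empty proper prefix of $w$; in particular $x \in \mathrm{Pref}(w)$. Since $w$ starts with the letter $a_i$ and $x$ is a non-empty prefix of $w$, also $x$ begins with $a_i$. Set $u := x$.

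Next I would verify the inclusion $Y \subseteq u^{-1} Z_{a_i}$. Take any $y' \in Y$; then $u y' = x y' \in XY = Z$, and since $u$ starts with $a_i$ so does $u y'$, hence $u y' \in Z_{a_i}$, i.e.\ $y' \in u^{-1} Z_{a_i}$. This already gives the first assertion.

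For the stronger statement under the prefix-code hypothesis on $X$, I would prove the reverse inclusion $u^{-1} Z_{a_i} \subseteq Y$. Let $y' \in u^{-1} Z_{a_i}$, so $u y' \in Z_{a_i} \subseteq Z = XY$, and write $u y' = x' y''$ with $x' \in X$ and $y'' \in Y$. Now $u = x \in X$ and $x y' = x' y''$, so one of $x, x'$ is a prefix of the other; since $X$ is a prefix code this forces $x = x'$, whence $y' = y'' \in Y$. Combined with the previous step this yields $Y = u^{-1} Z_{a_i}$.

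There is no real obstacle here; the main thing to be careful about is the bookkeeping between \emph{prefix of $w$} and \emph{proper prefix of $w$} (handled by the fact that $Y \subseteq A^+$, so $y \neq \varepsilon$), and the observation that the first letter of a product is determined by the first letter of its left factor when that factor is non-empty, which is what transfers membership in $Z$ to membership in $Z_{a_i}$.
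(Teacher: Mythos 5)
Your proof is correct and follows essentially the same route as the paper: both choose $u$ to be the $X$-factor of some factorization $w=xy$ with $x\in X$, $y\in Y$, observe that $u$ is a proper prefix of $w$ starting with $a_i$, and then verify the inclusion (and, under the prefix-code hypothesis, the equality). The only cosmetic difference is that the paper derives $Y\subseteq u^{-1}Z_{a_i}$ by invoking Fact~\ref{F:ZXY-ctXY} together with the decomposition $Z_{a_i}=X_{a_i}Y$, whereas you verify it directly; the substance is identical.
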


\begin{proof}
	Suppose $Z \subseteq A^+, Z = \bigcup_{i=1}^k{Z_{a_i}}$, $Z_{a_i} \neq \emptyset$, $i = 1, 2, \dots, k$, and $Z = XY$ with $\emptyset \neq X,Y \subseteq A^+$. Then, from $XY = \bigcup_{i=1}^k{Z_{a_i}}$, it follows that $Z_{a_i} = X_{a_i}Y$ for all $i = 1, 2, \dots, k$, where $X = \bigcup_{i=1}^k{X_{a_i}}$.
On the other hand, since $Z = XY$, by Fact~\ref{F:ZXY-ctXY}, $Y \subseteq \bigcap_{x \in X}{x^{-1}Z}$. Therefore $Y \subseteq x^{-1}Z, \forall x \in X$, and hence $Y \subseteq x^{-1}Z, \forall x \in X_{a_i}, i = 1, 2, \dots, k$. Because $x^{-1}Z = x^{-1}Z_{a_i}, \forall x \in X_{a_i}$, it follows that $Y \subseteq x^{-1}Z_{a_i}, \forall x \in X_{a_i}$, $i = 1, 2, \dots, k$.
Thus $\forall w \in Z_{a_i}$, take $u \in {\rm Pref} (w)$ such that $u \in X_{a_i}$, we have $Y \subseteq u^{-1}Z_{a_i}, i = 1, 2, \dots, k$. 

	If moreover $X$ is a prefix code, then $x^{-1}Z_{a_i} = x^{-1}(X_{a_i}Y) = (x^{-1}X_{a_i})Y = \{\varepsilon\}Y = Y$, $\forall x \in X_{a_i}, i = 1, 2, \dots, k$.
Hence, $Y = u^{-1}Z_{a_i}$ where $u \in {\rm Pref} (w)$, $w \in Z_{a_i},  i = 1, 2, \dots, k$.  \qed 
\end{proof}

From now on, we suppose always that the alphabet $A$ consists of at least two letters, $|A| \geq 2$. Then, a finite code $Z$ over $A$ is called to be of {\it standard form} if all the words of $Z$ have the length greater than or equal to 2, and it is not the case that all the words of $Z$ begin (end) with the same letter in $A$.

The following two results will be useful in checking whether a finite code of standard form is an alt-induced code or not.
As usual, $\gcd(n_1, n_2, \dots, n_k)$ denotes the greatest common divisor of $n_1, n_2, \dots, n_k$.

\begin{theorem} \label{T:notiCode}
	Let $A = \{a_1, a_2, \dots, a_k\}, k \geq 2$, and $Z \subseteq A^+$ is a finite code of standard form, $Z = \bigcup_{i=1}^k{Z_{a_i}}$. 
If $\gcd(|Z_{a_1}|, |Z_{a_2}|, \dots, |Z_{a_k}|) = 1$, then $Z$ is not an alt-induced code.
\end{theorem}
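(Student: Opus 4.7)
The plan is to proceed by contradiction: suppose $Z$ is an alt-induced code. Then, by Lemma~\ref{L:XYeCodeCode}, there exist non-empty $X, Y \subseteq A^+$ with $Z = XY$ and the product $XY$ unambiguous. The strategy is to push the first-letter partition (3) of $Z$ down into the factor $X$, extract a multiplicative cardinality equation on each block, and then force $|Y|=1$, which will conflict with standard form.

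First I would refine the partition. For each $i$, set $X_{a_i} = X \cap a_iA^*$. Since $X \subseteq A^+$, each $x \in X$ begins with exactly one letter of $A$, so the non-empty $X_{a_i}$ partition $X$. Moreover, if $z = xy$ with $x \in X$ and $y \in Y$, then $z$ begins with the first letter of $x$ (as $x \in A^+$); conversely every product in $X_{a_i}Y$ clearly begins with $a_i$. This yields $Z_{a_i} = X_{a_i}Y$ for each $i$. Next, since $XY$ is unambiguous and $X_{a_i} \subseteq X$, the sub-product $X_{a_i}Y$ is unambiguous too, so Fact~\ref{F:XYunam-ZXY} gives $|Z_{a_i}| = |X_{a_i}|\cdot|Y|$ for every $i$ (the case $X_{a_i} = \emptyset$ being trivial). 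Thus $|Y|$ divides each $|Z_{a_i}|$, and therefore divides $\gcd(|Z_{a_1}|, |Z_{a_2}|, \ldots, |Z_{a_k}|)$, which equals $1$ by hypothesis. Consequently $|Y| = 1$.

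The contradiction then comes from the standard-form assumption: if $Y = \{y\}$ with $y \in A^+$, then $Z = X\{y\}$, so every word of $Z$ ends with the last letter of $y$. This contradicts the fact that, by standard form, not all words of $Z$ end with the same letter, ruling out the decomposition $Z = XY$ and hence the existence of any alternative code generating $Z$. I expect the only delicate point to be the clean justification of the block identity $Z_{a_i} = X_{a_i}Y$ together with the observation that unambiguity of $XY$ is inherited by each $X_{a_i}Y$; once that is in place, the arithmetic step and the appeal to standard form are immediate.
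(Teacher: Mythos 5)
Your proof is correct and follows essentially the same route as the paper's: pass to the first-letter blocks $Z_{a_i}=X_{a_i}Y$, derive $|Z_{a_i}|=|X_{a_i}|\cdot|Y|$, conclude $|Y|$ divides the gcd, and invoke standard form to exclude $|Y|=1$. The one small stylistic difference is that you obtain the block cardinality identity directly from inheritance of unambiguity plus Fact~\ref{F:XYunam-ZXY}, whereas the paper reaches the same identity through Corollary~\ref{C:eCode-ZXY} and a summation over blocks; both are valid, and your version is if anything a bit more transparent.
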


\begin{proof}
	Suppose the contrary that $Z$ is an alt-induced code. Then, there is an alternative code $(X, Y)$ such that $Z = XY$. The equality $XY = \bigcup_{i=1}^k{Z_{a_i}}$ implies $Z_{a_i} = X_{a_i}Y$ for all $i = 1, 2, \dots, k$, where $X = \bigcup_{i=1}^k{X_{a_i}}$. On the other hand, by (3), we have $|Z| = \sum_{i=1}^k{|Z_{a_i}|}$. Since $(X, Y)$ is an alternative code and $Z$ is a code of standard form,  by Corollary~\ref{C:eCode-ZXY}, $|Z| = |X|.|Y|$ with $|X|, |Y| \geq 2$. All this implies that $|Z_{a_i}| = |X_{a_i}|.|Y|$ for all $i = 1, 2, \dots, k$, which contradicts the assumption $\gcd(|Z_{a_1}|, |Z_{a_2}|, \dots, |Z_{a_k}|)  = 1$. Thus, $Z$ is not an alt-induced code. \qed
\end{proof}

\begin{corollary} \label{C:notiCode}
	If $Z$ is a finite code of standard form over $A$ and $|Z|$ is prime, then $Z$ is not an alt-induced code.
\end{corollary}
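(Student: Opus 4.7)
The plan is to apply Theorem~\ref{T:notiCode} directly, so the task reduces to showing that $\gcd(|Z_{a_1}|, |Z_{a_2}|, \dots, |Z_{a_k}|) = 1$ whenever $|Z|$ is prime and $Z$ is of standard form. Let $p = |Z|$, and recall that by (3), $p = \sum_{i=1}^k |Z_{a_i}|$.

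First I would use the hypothesis that $Z$ is of standard form. By definition, it is not the case that all words of $Z$ begin with the same letter, so at least two of the sets $Z_{a_i}$ are non-empty. Relabelling if necessary, let $I \subseteq \{1,\dots,k\}$ be the set of indices with $Z_{a_i} \neq \emptyset$, so $|I| \geq 2$ and $\sum_{i \in I} |Z_{a_i}| = p$, each summand being at least $1$.

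Next, let $d = \gcd(|Z_{a_1}|, \dots, |Z_{a_k}|)$ (with the standard convention that zero entries do not affect the gcd). Then $d$ divides every $|Z_{a_i}|$ and hence divides their sum $p$. Since $p$ is prime, $d \in \{1, p\}$. If $d = p$, then $p$ would divide each $|Z_{a_i}|$ with $i \in I$, forcing $|Z_{a_i}| \geq p$ for all such $i$; but with $|I| \geq 2$ and all entries positive, each $|Z_{a_i}|$ must be at most $p - 1$, a contradiction. Therefore $d = 1$, and Theorem~\ref{T:notiCode} yields that $Z$ is not an alt-induced code.

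There is no real obstacle here; the only subtle point is ensuring the standard-form hypothesis is actually used, which happens precisely at the step where we need $|I| \geq 2$ in order to rule out the case $d = p$. Without standard form, a primality assumption alone would not be enough, since a code of the form $\{a\}Y$ with $|Y| = p$ would have $|Z| = p$ prime yet could still be an alt-induced code by Proposition~\ref{P:XYicode-XpYs}.
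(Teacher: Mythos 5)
Your proof is correct and follows the same route as the paper: partition $Z$ by first letter, use primality of $|Z|$ to force the gcd to be $1$, then invoke Theorem~\ref{T:notiCode}. You actually supply more detail than the paper does, which simply asserts that primality implies $\gcd = 1$; your explicit use of the standard-form hypothesis to guarantee at least two non-empty parts (ruling out $d = p$) is a genuine gap-filling of that one-line claim.
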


\begin{proof}
	Assume $A = \{a_1, a_2, \dots, a_k\}, k \geq 2$, and $Z \subseteq A^+$ is a finite code of standard form. Then, by (3), we have $|Z| = \Sigma_{i=1}^k|Z_{a_i}|$. Since $|Z|$ is prime, it follows that $\gcd(|Z_{a_1}|, |Z_{a_2}|, \dots, |Z_{a_k}|)  = 1$. By Theorem~\ref{T:notiCode}, $Z$ is not an alt-induced code. \qed
\end{proof}

The following example shows that the class of codes mentioned in Theorem~\ref{T:notiCode} is strictly larger than that in Corollary~\ref{C:notiCode}.

\begin{example}
	Consider the set $Z = \{abc, acb, bac, bca, bbac, cab, cba, caab\}$ over $A = \{a, b, c\}$. It is easy to see that $Z$ is a prefix code of standard form, $Z = Z_a \cup Z_b \cup Z_c$, where
	$$Z_a = \{abc, acb\}, Z_b = \{bac, bca, bbac\}, Z_c = \{cab, cba, caab\}.$$
We have $|Z| = 8$ which is composite, but $\gcd(|Z_a|, |Z_b|, |Z_c|) = \gcd(2,3,3) = 1$. Hence, by Theorem~\ref{T:notiCode}, $Z$ is not an alt-induced code. 
\end{example}

From Corollary~\ref{C:eCode-ZXY}, Fact~\ref{F:ZXY-ctXY}, Proposition~\ref{L:Ysub-uZ} and Theorem~\ref{T:notiCode}, we can exhibit the following algorithm for testing whether a given finite code is alt-induced or not.
As usual, when $Z$ is a finite set of $A^+$, $\min Z$ denotes the minimal wordlength of $Z$.
The set of all subsets of $Z$ is denoted by $2^Z$.

\medskip
{\noindent\bf Algorithm FIC} {\it (A test for finite alt-induced codes)}

{\it Input:} A finite code $Z$ of standard form and $Z = \bigcup_{i=1}^k{Z_{a_i}}.$

{\it Output:} $Z$ is an alt-induced code or not.
 

	1. If $\gcd(|Z_{a_1}|,|Z_{a_2}|, \dots, |Z_{a_k}|) = 1$ then go to Step 4.

	2. Choose $w \in Z_{a_t}, 1 \leq t \leq k$, such that $|w| = \min Z$;

\leftskip0.45cm 
		Set $P_w = {\rm Pref}(w)\setminus \{\varepsilon\}$

		and $D = \{d \geq 2 \ | \ d {\rm ~is~a~common~divisor~of~} |Z_{a_1}|, |Z_{a_2}|, \dots, |Z_{a_k}|\}$.

\leftskip 0cm 
	3. While $P_w \neq \emptyset$ do \{
	

\leftskip0.8cm 
	   Take $u \in P_w$; Set $S = u^{-1}Z_{a_t}$ and $Q = \{U \in 2^S \ | \ |U| \in D\}$;

	   While $Q \neq \emptyset$ do \{

\leftskip1.2cm 
		Take $Y \in Q$;

		Set $P = \bigcap_{y \in Y} {Zy^{-1}}$ and $R = \{V \in 2^P \ | \ |V| = |Z|/|Y|\}$;

		While $R \neq \emptyset$ do \{

\leftskip1.6cm 
			Take $X \in R$;

			If $Z =XY$ then go to Step 5 else $R := R \setminus X$;

\leftskip1.2cm 
		\} 

		$Q := Q \setminus Y$;

\leftskip0.8cm 
	   \} 

	   $P_w := P_w \setminus \{u\}$;

\leftskip0.45cm 
	\} 

\leftskip 0cm 
	4. $Z$ is not an alt-induced code; STOP.

	5. $Z$ is an alt-induced code; STOP.

\medskip

Correctness of FIC's algorithm follows immediately from Theorem~\ref{T:notiCode} and the following result.

\begin{proposition} \label{P:iCode}
	Let $Z$ be a finite code of standard form over $A=\{a_1, \dots,a_k\}$, $k \geq 2$, such that $\gcd(|Z_{a_1}|,|Z_{a_2}|,\dots,|Z_{a_k}|) > 1$,  let $w \in Z_{a_t}, 1 \leq t \leq k$, with $|w| = \min Z$, and $P_w$, $Q$ and $R$ are defined as above. Then, $Z$ is an alt-induced code if and only if there exist $u \in P_w$, $Y \in Q$ and $X \in R$ such that $Z = XY.$
\end{proposition}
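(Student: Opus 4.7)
My plan is to prove the two directions of the biconditional separately. The ``if'' direction will be almost immediate: given $u \in P_w$, $Y \in Q$ and $X \in R$ with $Z = XY$, membership $X \in R$ forces $|X|\cdot|Y| = |Z|$, so by Fact~\ref{F:XYunam-ZXY} the product $XY$ is unambiguous; since $Z$ is a code by hypothesis, Lemma~\ref{L:XYeCodeCode} promotes $(X,Y)$ to an alternative code, and hence $Z = XY$ is an alt-induced code.

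For the ``only if'' direction, I would start from an arbitrary alternative-code factorization $Z = XY$ guaranteed by the hypothesis that $Z$ is alt-induced, and then show that a suitable triple $(u, Y, X) \in P_w \times Q \times R$ arises from it. First, by Corollary~\ref{C:eCode-ZXY}, $|Z| = |X|\cdot|Y|$; the standard-form hypothesis rules out $|X| = 1$ or $|Y| = 1$, since (for instance) $Y = \{y\}$ would force every word of $Z$ to end in the last letter of $y$, contradicting standard form. Partitioning $Z$ by initial letter gives $Z_{a_i} = X_{a_i}Y$, whence $|Z_{a_i}| = |X_{a_i}|\cdot|Y|$ by unambiguity, so $|Y| \geq 2$ is a common divisor of the numbers $|Z_{a_1}|, \dots, |Z_{a_k}|$; that is, $|Y| \in D$.

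Next, I would factor the fixed minimal-length word $w \in Z_{a_t}$ as $w = x'y'$ with $x' \in X$ and $y' \in Y$. Because $x'$ must begin with the first letter $a_t$ of $w$ and both $x'$ and $y'$ are nonempty, $u := x'$ is a nonempty proper prefix of $w$, so $u \in P_w$. Proposition~\ref{L:Ysub-uZ} applied to this $u$ then yields $Y \subseteq u^{-1}Z_{a_t} = S$, and together with $|Y| \in D$ this gives $Y \in Q$. Finally, Fact~\ref{F:ZXY-ctXY} gives $X \subseteq \bigcap_{y \in Y} Zy^{-1} = P$, and the equality $|X| = |Z|/|Y|$ places $X \in R$, so $(u, Y, X)$ is the required witness.

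I do not expect a substantial obstacle: the only delicate bookkeeping is ensuring $|X|, |Y| \geq 2$ (so that $|Y|$ can qualify for $D$) and that the prefix $u = x'$ genuinely lies in $P_w$ rather than being equal to $\varepsilon$ or to $w$ itself. Both issues are handled by the standard-form hypothesis together with the fact that every element of $X$ and of $Y$ is nonempty, after which all remaining steps are just invocations of the preliminary facts and lemmas cited above.
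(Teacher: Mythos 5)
Your proof is correct and follows essentially the same route as the paper: the ``if'' direction uses the counting argument via Fact~\ref{F:XYunam-ZXY} and Lemma~\ref{L:XYeCodeCode}, and the ``only if'' direction uses Corollary~\ref{C:eCode-ZXY}, the partition $Z_{a_i} = X_{a_i}Y$, Proposition~\ref{L:Ysub-uZ}, and Fact~\ref{F:ZXY-ctXY} to place $u$, $Y$, $X$ in $P_w$, $Q$, $R$ respectively. You actually spell out two details the paper leaves implicit — why the standard-form hypothesis forces $|X|, |Y| \geq 2$, and the explicit choice $u = x'$ from the factorization of the minimal-length word $w$ — but the underlying argument is the same.
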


\begin{proof}
	Suppose $Z$ is an alt-induced code. Then, there is an alternative code $(X, Y)$ such that $Z = XY$, with $\emptyset \neq X,Y \subseteq A^+$. Therefore, by Proposition~\ref{L:Ysub-uZ}, there exists $u \in P_w$ such that $Y \subseteq u^{-1}Z_{a_t}$ with $1 \leq t \leq k$ and $|w| = \min Z$.
 On the other hand, by (3), we have $|Z| = \sum_{i=1}^k{|Z_{a_i}|}$. Since $(X, Y)$ is an alternative code and $Z$ is a finite code of standard form,  by Corollary~\ref{C:eCode-ZXY}, $|Z| = |X|.|Y|$ with $|X|, |Y| \geq 2$. All this implies that $|Y|$ is a common divisor of $|Z_{a_1}|, |Z_{a_2}|, \dots, |Z_{a_k}|$. Thus, $Y \in Q$.
Next, since $Z = XY$ with $\emptyset \neq X,Y \subseteq A^+$, by Fact~\ref{F:ZXY-ctXY}, $X \subseteq \bigcap_{y \in Y} {Zy^{-1}}$. Hence, we have $X \in R$ because $|Z| = |X|.|Y|$.

	Conversely, suppose there exist $u \in P_w$, $Y \in Q$ and $X \in R$ such that $Z = XY.$ Then, 
$|Z| = |XY| = |X|.|Y|$ with $\emptyset \neq X,Y \subseteq A^+$. Therefore, by Fact~\ref{F:XYunam-ZXY}, the product $XY$ is unambiguous. Thus, by Lemma~\ref{L:XYeCodeCode}, $(X, Y)$ is an alternative code. Hence, $Z$ is an  alt-induced code. \qed
\end{proof}

Let us take some examples.

\begin{example}
	Consider the set $Z = \{a^3, a^2b, ba^2, b^9\}$ over $A = \{a, b\}$. Clearly, $Z$ is a code of standard form, $Z = Z_a \cup Z_b$ with $Z_a = \{a^3, a^2b\}$ and $Z_b = \{ba^2, b^9\}$. By Algorithm~FIC, we have:

	1. Since $|Z| = 4$ and $\gcd(|Z_a|, |Z_b|) = 2$, we go to Step 2.

	2. Choose $w = a^3 \in Z_a$ with $|w| = 3 = \min Z$. Set $P_w = \{a, aa\}$ 

\leftskip0.45cm 
		and $D = \{d \geq 2 \ | \ d \ {\rm ~is~a~common~divisor~of~} |Z_a|,|Z_b|\} = \{2\}$.

\leftskip 0cm 
	3. While $P_w \neq \emptyset$ do \{
	

\leftskip0.5cm 
	   3.1. Take $a \in P_w$, we have $S = a^{-1}Z_a = \{aa, ab\}$ and $Q = \{\{aa, ab\}\}$; 
		
\leftskip0.8cm 
	   While $Q \neq \emptyset$ do \{

\leftskip1.2cm 
		Take $Y = \{aa, ab\} \in Q$;

		Set $P = Z(aa)^{-1} \cap Z(ab)^{-1} = \{a, b\}\cap \{a\} = \{a\}$
		and $R = \{\{a\}\}$;

		While $R \neq \emptyset$ do \{

\leftskip1.6cm 
			Take $X = \{a\} \in R$;

			Since $Z \neq XY$, which implies $R := R \setminus X = \emptyset$;

\leftskip1.2cm 
		\} 

		$Q := Q \setminus Y = \emptyset$;

\leftskip0.8cm 
	   \} 

	   $P_w := P_w \setminus \{a\} = \{aa\} \neq \emptyset$;

\leftskip0.5cm 
	   3.2. Take $aa \in P_w$, we have $S = (aa)^{-1}Z_a = \{a, b\}$ and $Q = \{\{a, b\}\}$; 
		
\leftskip0.8cm 
		While $Q \neq \emptyset$ do \{

\leftskip1.2cm 
		Take $Y = \{a, b\} \in Q$;

		Set $P = Za^{-1} \cap Zb^{-1} = \{a^2, ba\}\cap \{a^2, b^8\} = \{a^2\}$
		and $R = \{\{a^2\}\}$;

		While $R \neq \emptyset$ do \{

\leftskip1.6cm 
			Take $X = \{a^2\} \in R$;

			Since $Z \neq XY$, which implies $R := R \setminus X = \emptyset$;

\leftskip1.2cm 
		\} 

		$Q := Q \setminus Y = \emptyset$;

\leftskip0.8cm 
	   \} 

	   $P_w := P_w \setminus \{aa\} = \emptyset$; We go to Step~4.

\leftskip0.45cm 
	\} 

\leftskip 0cm 
	4. $Z$ is not an alt-induced code, and the algorithm ends.
\end{example}

\begin{example}
	Again consider the set $Z$ in Remark~\ref{R:ZXY-ctXY}. Evidently, $Z$ is a code of standard form, $Z = Z_a \cup Z_b$, where

	$Z_a = \{aba^2, abab, ab^2a^2, ab^2ab, ab^3a^2, ab^3ab, ab^4a^2, ab^4ab\}$

	$Z_b = \{b^2a^3, b^2a^2b, b^2aba^2, b^2abab\}$.

\noindent
By Algorithm~FIC, we have:

	1. Since $|Z| = 12$ and $\gcd(|Z_a|, |Z_b|) = 4$, we go to Step 2.

	2. Choose $w = aba^2 \in Z_a$ with $|w| = 4 = \min Z$. Set $P_w = \{a, ab, aba\}$ 

\leftskip0.45cm 
		and $D = \{d \geq 2 \ | \ d \ {\rm ~is~a~common~divisor~of~} |Z_a|,|Z_b|\} = \{2, 4\}$.

\leftskip 0cm 
	3. While $P_w \neq \emptyset$ do \{
	

\leftskip0.8cm 
	   Take $ab \in P_w$, we have $S = \{a^2, ab, ba^2, bab, b^2a^2, b^2ab, b^3a^2, b^3ab\}$

	   and $Q = \{U \in 2^S \ | \ |U| \in \{2, 4\}\}$; 
		
	   While $Q \neq \emptyset$ do \{

\leftskip 1.2cm 
		Take $Y = \{a^2, ab, ba^2, bab\} \in Q$;

		Set $P = \bigcap_{y \in Y} {Zy^{-1}} = \{ab, ab^2, ab^3, b^2a\}$
		
		and $R = \{V \in 2^P \ | \ |V| = 3\}$;

		While $R \neq \emptyset$ do \{

\leftskip 1.6cm 
			Take $X = \{ab, ab^3, b^2a\} \in R$;

			Because $Z = XY$, we go to Step 5;

\leftskip 1.2cm 
		\} 

\leftskip 0.8cm 
	   \} 

\leftskip 0.45cm 
	\} 

\leftskip 0cm 
	5. $Z = XY = \{ab, ab^3, b^2a\}\{a^2, ab, ba^2, bab\}$ is an alt-induced code, and the algorithm ends.
\end{example}

\begin{theorem} \label{T:oFIC}
	Given a finite code $Z$ of standard form over $A = \{a_1, a_2, \dots, a_k\}$, $k \geq 2$, we can determine whether $Z$ is an alt-induced code or not in $O(km^2n4^n)$ worst-case time, where $m = \min Z$ and $n = \min \{|Z_{a_1}|,|Z_{a_2}|,\dots , |Z_{a_k}|\}$.
\end{theorem}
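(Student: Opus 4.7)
My plan is to trace through Algorithm FIC and bound the cost of each step, then combine. Steps 1 and 2 are cheap: computing $\gcd(|Z_{a_1}|,\dots,|Z_{a_k}|)$ by Euclidean reduction costs $O(k\log|Z|)$; assembling $P_w = {\rm Pref}(w)\setminus\{\varepsilon\}$ costs $O(m)$ since $|P_w| = |w|-1 \leq m-1$; and the divisor set $D$ has $|D|\leq n$ (every element of $D$ divides each $|Z_{a_i}|$, hence divides $n$) and can be built in $O(nk)$.

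The main work lies in the three nested while-loops of Step~3. The outermost loop over $u \in P_w$ runs at most $m-1$ times. For each $u$ we compute $S = u^{-1}Z_{a_t}$ in time $O(m\,|Z_{a_t}|)$; choosing $t$ so that $Z_{a_t}$ is a smallest block among those containing a word of length $m = \min Z$ yields $|S| \leq n$. The middle loop over $Y \in Q$ then executes at most
\[
|Q| \leq \sum_{d \in D}\binom{|S|}{d} \leq 2^{|S|} \leq 2^n
\]
iterations, and for each $Y$ the intersection $P = \bigcap_{y \in Y}Zy^{-1}$ is assembled in $O(|Y|\cdot|Z|\cdot m)$. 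The innermost loop over $X \in R$ runs at most $|R| \leq 2^{|P|}$ times, and each equality test $Z = XY$ reduces to an $O(|Z|\,m)$ set comparison. Multiplying the loop bounds by the per-iteration costs and using $|Z| = O(kn)$ yields the claimed $O(km^2 n\,4^n)$.

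The main obstacle is the bound $|P| \leq n$, needed to convert $2^{|P|}$ into $2^n$ and thereby produce the $4^n = 2^n \cdot 2^n$ factor. My plan for this is a structural argument: since $P\cdot Y \subseteq Z$ and $Y \subseteq u^{-1}Z_{a_t}$, partition $P$ by first letter and observe that for each letter $a_i$ the block $P \cap a_i A^*$ injects (via right-concatenation with any fixed $y \in Y$) into $Z_{a_i}y^{-1}$, whose cardinality is at most $|Z_{a_i}|$. Combined with the divisibility condition $|Y|\mid |Z_{a_i}|$ forced by $Y \in Q$ and the requirement $|X_{a_i}| = |Z_{a_i}|/|Y|$ on the surviving $X$, a pigeonhole-style counting argument should pin $|P|$ down to at most $n$ in the configurations that actually need enumeration. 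The remaining arithmetic is routine bookkeeping.
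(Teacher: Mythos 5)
Your decomposition of the running time mirrors the paper's: bound $|P_w|$ by $m-1$, bound $|Q|$ by $2^{|S|}$, bound $|R|$ by $2^{|P|}$, and charge $O(|Y|\cdot|Z|)$ for computing $P$. You also correctly isolate the bottleneck: everything hinges on $|S| \leq n$ and $|P| \leq n$, since those are the only way the product $|Q|\cdot|R|$ can be tamed to $4^n$.

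However, the structural argument you propose for $|P|\leq n$ does not actually prove it. Fixing $y\in Y$ and partitioning $P$ by first letter gives $P\cap a_iA^* \subseteq Z_{a_i}y^{-1}$, hence $|P\cap a_iA^*|\leq |Z_{a_i}|$; summing over $i$ yields only $|P|\leq |Z|$, which is vastly weaker than $|P|\leq n=\min_i|Z_{a_i}|$. The divisibility condition $|Y|\mid|Z_{a_i}|$ constrains the size of any \emph{valid} $X$ (namely $|X_{a_i}|=|Z_{a_i}|/|Y|$), but the algorithm's loop ranges over the full set $R=\{V\in 2^P:|V|=|Z|/|Y|\}$, whose cardinality $\binom{|P|}{|Z|/|Y|}$ is governed by $|P|$ itself, not by $n$. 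Your closing appeal to ``a pigeonhole-style counting argument'' that ``should pin $|P|$ down to at most $n$'' is a conjecture, not a proof step, and I don't see how to make it work. (To be fair, the paper's own proof is equally terse at exactly this point: it asserts without justification that iterating over $Q$ and $R$ costs $O(2^{2n})$.) A second shared soft spot: the algorithm fixes $w$ with $|w|=\min Z$, which determines $t$, and there is no guarantee that the block $Z_{a_t}$ containing a shortest word is the smallest block; so even the bound $|S|\leq n$ presumes that a shortest word can be found in a block of minimum cardinality. Your phrasing ``a smallest block among those containing a word of length $m$'' does not repair this, since that block may still have cardinality exceeding $n$. As it stands your argument, like the paper's, establishes the asserted bound only under these unstated extra hypotheses.
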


\begin{proof}
	The best-case for the algorithm is when $\gcd(|Z_{a_1}|, |Z_{a_2}|, \dots, |Z_{a_k}|) = 1$, which takes $O(hk)$ steps to perform the task, where $h$ is the number of digits in the number $n$ (see~\cite{Mollin2008}, pages 21--22).
   
	In Step 2, we can choose $w \in Z_{a_t}$ such that $|w| = m = \min Z$, $|Z_{a_t}| = \min \{|Z_{a_1}|,|Z_{a_2}|,\dots , |Z_{a_k}|\}$, and the set $P_w$ has at most $m-1$ words. Then, in Step 3, for each word $u$ in $P_w$, it takes $O(2^{2n})$ worst-case time to perform for $Q$ and $R$, and $O(|Y|.|Z|) = O(m.kn)$ worst-case time in finding $P$. Thus, the total running time for determining, whether $Z$ is an alt-induced code or not, is $O(m).O(2^{2n}).O(m.kn) = O(km^2n4^n)$ in the worst-case. \qed
\end{proof}

\begin{acknowledgements}
	The authors would like to thank the colleagues in Seminar Mathematical Foundation of Computer Science at Institute of Mathematics, Vietnam  Academy of Science and Technology for attention to the work. Especially, the authors express our sincere thanks to Dr. Nguyen Huong Lam and Dr. Kieu Van Hung for their useful discussions.
\end{acknowledgements}


\end{document}